\newtcolorbox{construction}[2][]
{
	colframe = gray!50,
	colback  = gray!10,
	coltitle = gray!10!black,
	left*=0mm, 
	before skip = 10pt,
	after skip = 10pt,
	title    = \textbf{\space\space #2},
	#1,
}
\definecolor{ForestGreen}{rgb}{0.1333,0.5451,0.1333}
\definecolor{DarkRed}{rgb}{0.80,0,0}
\definecolor{Red}{rgb}{1,0,0}
\declaretheorem[numberwithin=section,refname={Theorem,Theorems},Refname={Theorem,Theorems}]{theorem}
\declaretheorem[numberwithin=section,name=Theorem,refname={Theorem,Theorems},Refname={Theorem,Theorems}]{thm}
\declaretheorem[numberlike=theorem]{lemma}
\declaretheorem[numberlike=theorem,style=definition]{definition}
\declaretheorem[numberlike=theorem,style=remark]{remark}
\declaretheorem[numberlike=theorem, refname={Problem,Problems},Refname={Problem,Problems},name={Problem}]{problem}
\theoremstyle{definition}
\def\final{1}  %
\newcommand{\todo}[1]{{\color{red}[{\tiny TODO: \bf #1}]\marginpar{\color{red}*}}}
\newcommand{\thatchaphol}[1]{{\color{purple}[{\tiny Thatchaphol: \bf #1}]\marginpar{\color{purple}*}}}
\newcommand{\yonggang}[1]{{\color{blue}[{\tiny Yonggang: \bf #1}]\marginpar{\color{blue}*}}}
\newcommand{\yonggang}[1]{}
\newcommand{\todo}[1]{}
\newcommand{\thatchaphol}[1]{}
\newcommand{\eps}{\epsilon}
\newcommand{\poly}{\mathrm{poly}}
\newcommand{\polylog}{\mathrm{polylog}}
\newcommand{\dist}{\mathrm{dist}}
\newcommand{\Bo}{\mathrm{Ball}^{\mathrm{out}}}
\newcommand{\Bi}{\mathrm{Ball}^{\mathrm{in}}}
\newcommand{\dego}{\deg^{\mathrm{out}}}
\newcommand{\degi}{\deg^{\mathrm{in}}}
\newcommand{\Start}{\mathrm{Start}}
\newcommand{\End}{\mathrm{End}}
\newcommand{\Ho}{H^{\mathrm{out}}}
\newcommand{\Hi}{H^{\mathrm{in}}}
\newcommand{\Hm}{H^{\mathrm{mid}}}
\newcommand{\bHo}{\bar{H}^{\mathrm{out}}}
\newcommand{\tHo}{\tilde{H}^{\mathrm{out}}}
\newcommand{\bHi}{\bar{H}^{\mathrm{in}}}
\newcommand{\iBo}{B^{\mathrm{out}}}
\newcommand{\bBo}{\bar{B}^{\mathrm{out}}}
\newcommand{\iBi}{B^{\mathrm{in}}}
\newcommand{\bBi}{\bar{B}^{\mathrm{in}}}
\newcommand{\GC}{G'_{\mathrm{SCC}}}
\begin{document}
\sloppy

\title{
Deterministic Negative-Weight Shortest Paths\\ in Nearly Linear Time via Path Covers
}
\author{Bernhard Haeupler\thanks{
        INSAIT, Sofia University ``St.~Kliment Ohridski'' and ETH Zürich,
        \texttt{bernhard.haeupler@insait.ai}.
        Partially funded by the Ministry of Education and Science of Bulgaria's support for INSAIT as part of the Bulgarian National Roadmap for Research Infrastructure and through the European Research Council (ERC) under the European Union's Horizon 2020 research and innovation program (ERC grant agreement 949272).} \and 
Yonggang Jiang\thanks{MPI-INF and Saarland University, Germany, \texttt{yjiang@mpi-inf.mpg.de}. Part of this work was done while visiting INSAIT. Supported by Google PhD Fellowship.} \and   
Thatchaphol Saranurak\thanks{
        University of Michigan,
        \texttt{thsa@umich.edu}.
        Supported by NSF Grant CCF-2238138 and a Sloan Fellowship. Part of this work was done at INSAIT. Partially funded by the Ministry of Education and Science of Bulgaria's support for INSAIT as part of the Bulgarian National Roadmap for Research Infrastructure. }
}
\date{}
\maketitle
\pagenumbering{gobble}
\begin{abstract}

We present the first \emph{deterministic} nearly-linear time algorithm for single-source shortest paths with negative edge weights on directed graphs: given a directed graph $G$ with $n$ vertices, $m$ edges whose weights are integers in $\{-W,\dots,W\}$, our algorithm either computes all distances from a source $s$ or reports a negative cycle in $\tilde{O}(m)\cdot \log(nW)$ time. 

All  known near-linear time algorithms for this problem have been inherently randomized \cite{BernsteinNW25,BringmannCF23,FischerHLRS25,li2024bottom}, as they crucially rely on \emph{low-diameter decompositions}.

To overcome this barrier, we introduce a new structural primitive for directed graphs called the \emph{path cover}. This plays a role analogous to \emph{neighborhood covers} in undirected graphs \cite{awerbuch1998near}, which have long been central to derandomizing algorithms that use low-diameter decomposition in the undirected setting. We believe that path covers will serve as a fundamental tool for the design of future deterministic algorithms on directed graphs.

\end{abstract}

\clearpage
\tableofcontents

\clearpage

\pagenumbering{arabic}
\section{Introduction}

In the negative weight single-source shortest paths (SSSP) problem, given an $m$-edge $n$-vertex directed weighted graph $G=(V,E,w)$ with \emph{possibly negative} edge weight $w(e)\in\{-W,\dots,W-1,W\}$ for each $e\in E$ and a source vertex $s\in V$, the goal is to either compute the distance from $s$ to every other vertex or find a negative-weight cycle.

When weights are non-negative, the textbook Dijkstra algorithm takes near-linear $O(m+n\log n)$ time. Another textbook Bellman-Ford algorithm can handle negative weights, but requires much slower $O(mn)$ time. This contrast motivates the search for near-linear time algorithms since 1980s \cite{garbow1985scaling,gabow1989faster,goldberg1995scaling,cohen2017negative,van2020bipartite,axiotis2020circulation,van2021minimum}.

In 2022, two algorithmic breakthroughs happened. Chen et al.~\cite{ChenKLPGS22} gave a randomized algorithm with $m^{1+o(1)}\log W$ time that solves an even more general minimum cost flow problem. Their approach is based on convex optimization and dynamic data structures and was later improved to be deterministic \cite{van2023deterministic,chen2024almost,van2024almost}. However, the dynamic data structures required in this approach are complicated and the $m^{o(1)}$ factor in time seems hard to avoid.

Independently, in the same year, Bernstein, Nanongkai, and Wulff-Nilsen \cite{BernsteinNW25} showed an algorithm with $O(m\log^{8}(n)\log(W))$ time, which is faster, combinatorial, and bypasses complicated dynamic data structures. The follow-up work \cite{BringmannCF23,ashvinkumar2023parallel,li2024bottom,FischerHLRS25} subsequently sped up the time to $O((m+n\log\log n)\log^{2}(n)\log(nW))$. Unfortunately, algorithms in this line of work are inherently randomized.

Until now, there is no deterministic negative weight SSSP algorithm with near-linear $O(m\cdot\mathrm{polylog}(nW))$ time.

\paragraph{Our result.}

We give the first deterministic near-linear algorithm, improving upon both lines of work above. 
\begin{thm}
[Deterministic SSSP] \label{thm:main} There is a \emph{deterministic} algorithm solving the negative weight single-source shortest path problem in $O(m\log^{8}n\log(nW))$ time. 
\end{thm}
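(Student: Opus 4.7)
The plan is to follow the scaling framework of Bernstein--Nanongkai--Wulff-Nilsen~\cite{BernsteinNW25} and its descendants~\cite{BringmannCF23,FischerHLRS25,li2024bottom}, which reduces negative-weight SSSP to $O(\log(nW))$ invocations of a subroutine (\emph{ScaleDown}) on graphs where, after a preceding round of potentials, the negative edges are confined to structures of bounded depth $D$; the new content is to replace the sole randomized ingredient in that subroutine---a low-diameter decomposition (LDD) of a directed graph---by a deterministic construction based on the new primitive of a \emph{path cover}. The outer scaling loop, the potential updates via Johnson's trick, and the final Dijkstra call are already deterministic and run in $\tilde{O}(m)$ time per scale, so the whole task reduces to providing a deterministic drop-in replacement for the directed LDD.

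Next I would formalize and use the path cover. Concretely, a path cover at depth $D$ should be a family $\mathcal{C}$ of vertex clusters such that (i) for every directed path $P$ in $G$ of total weight at most $D$, some $C\in\mathcal{C}$ contains every vertex of $P$; (ii) each $C$ has weighted radius $\tilde{O}(D)$ so that Bellman--Ford-style propagation inside $G[C]$ is cheap; and (iii) $\sum_{C\in\mathcal{C}}|E(G[C])|=\tilde{O}(m)$. This is the directed analogue of Awerbuch--Peleg neighborhood covers~\cite{awerbuch1998near}. Once such a cover is available, ScaleDown proceeds cluster-by-cluster: run bounded-depth Bellman--Ford inside each $G[C]$ to compute local distances, combine them into a potential $\phi$ that neutralizes every short negative path up to a small additive slack, and recurse. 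Correctness follows because every short negative path is captured entirely by at least one cluster, so the global effect of all negative edges is accounted for by purely local computations, replacing the expected-cut-probability guarantee of randomized LDD by a worst-case containment guarantee.

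The main obstacle is constructing such a path cover deterministically in $\tilde{O}(m)$ time. In undirected graphs, neighborhood covers admit clean deterministic ball-growing constructions with small overlap; in directed graphs, there is no symmetric BFS ball, out-balls and in-balls around the same pivot interact in subtle ways, and naive iterative ball growing either blows up the total cluster size or leaves long thin directed paths uncovered. The plan is a pivot-based greedy procedure: iteratively select pivots that hit many still-uncovered short paths, grow an out-ball and an in-ball of radius $\tilde{O}(D)$ around each pivot, glue them into a cluster, and charge the size of the new cluster to the paths it eliminates. Bounding the overlap requires a potential argument that forces each vertex to appear in only $\tilde{O}(1)$ clusters, along with a careful scheme to amortize repeated ball growths so that the total work stays near-linear; this is where I expect the bulk of the technical effort to lie, and is the deterministic analogue (in the directed setting) of the sparse-cut/ball-growing lemmas underlying neighborhood-cover constructions.

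Finally, summing the cost over the $O(\log(nW))$ outer scales gives total time $O(m\cdot\mathrm{polylog}(n)\cdot\log(nW))$; tracking the polylog overhead through the ScaleDown recursion and through the path-cover construction yields the claimed $O(m\log^{8}n\log(nW))$ bound. Negative-cycle detection is inherited verbatim from the randomized predecessors: whenever ScaleDown cannot produce a valid potential inside some cluster, that cluster certifies a negative cycle, which a single Bellman--Ford pass inside $G[C]$ extracts explicitly.
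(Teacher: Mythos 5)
Your plan rests on a definition that the paper itself proves is impossible. You want a path cover to be ``a family $\mathcal{C}$ of vertex clusters such that (i) every directed path $P$ of weight at most $D$ is contained in some $C\in\mathcal{C}$; (ii) each $C$ has weighted radius $\tilde{O}(D)$; (iii) $\sum_{C\in\mathcal{C}}|E(G[C])|=\tilde{O}(m)$,'' i.e., a collection of $\tilde{O}(D)$-clustered \emph{subgraphs} of near-linear total size covering all short paths. \Cref{thm:barrier} shows precisely that no such family exists on directed graphs: any collection of $\lambda d$-clustered subgraphs covering all $d$-paths must have total size $\Omega(m\sqrt{m/\lambda})$. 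Your proposed pivot-based greedy (growing out-balls and in-balls and gluing them into subgraph clusters, with each vertex appearing in $\tilde{O}(1)$ clusters) would, if it worked, produce exactly the object that the barrier rules out. This is not a fixable technical hiccup in the ball-growing analysis; it is an information-theoretic obstruction that you would run into no matter how cleverly you choose pivots or amortize overlap. The undirected neighborhood-cover intuition genuinely fails to transfer.

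The idea the paper uses to escape the barrier, and which is absent from your proposal, is to abandon subgraphs in favor of \emph{projections}: a $d$-path cover is a separate graph $G'$ together with a homomorphism $\pi:V(G')\to V(G)$, in which a vertex of $G$ may appear as several distinct copies; being a homomorphism rather than a subgraph inclusion is what lets a single $\tilde{O}(d)$-clustered graph of size $(1+1/\log n)m$ cover all $d$-paths (a directed cycle, for instance, is covered by a DAG that simply ``unrolls'' the cycle twice). The construction then grows a ball from a pivot to find a thin layer, recurses on the ball and on its complement, and concatenates the resulting covers as a \emph{layered projection}, using representatives to avoid a quadratic blow-up in inter-piece edges. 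On top of that, the reduction to SSSP is also somewhat different from the ScaleDown sketch you give: the paper goes through the restricted-SSSP framework of \cite{BringmannCF23,ashvinkumar2023parallel}, and the path cover $G'$ is used to build an auxiliary graph (with $O(\log^6 n)$ layered copies of $G'$) on which a potential from a recursive call turns all intra-SCC edges nonnegative, after which a single Dijkstra suffices. Your negative-cycle handling (Bellman--Ford inside a cluster) and your cluster-local Bellman--Ford propagation have no counterpart once clusters are replaced by projections. In short, the high-level outline (scaling, derandomize the LDD step) is right, but the central definitional move that makes the theorem true is missing, and the object you propose to construct provably does not exist.
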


To prove \Cref{thm:main}, we introduce a novel graph structure called \emph{path cover}, which we believe will find further applications beyond \Cref{thm:main} itself. We do not optimize the polylogarithmic factor in order to present the clearest version of this new concept. A more relaxed version of path covers can speed up our algorithm to $O(m\log^{5}n\log(nW))$ time. Below, we first define the path covering problem to motivate path covers.

\paragraph{The Path Covering Problem.}

We say a path $p$ is a \emph{$d$-path} if its length is at most $d$, and a subgraph $H\subseteq G$ \emph{covers} $p$ if $p\subseteq E(H)$. A graph $H$ is \emph{$d$-clustered} if every strongly-connected component of $H$ has diameter at most $d$. The negative-weight SSSP problem can be reduced to the following problems. 
\begin{problem}
[Covering $d$-Paths with Clustered Subgraphs]\label{prob:path covering}Fix a graph $G$ with \emph{non-negative} weights and $d\ge0$. An adversary chooses an \emph{unknown target} set $P$ of paths of length at most $d$ in $G$ where $|P|=O(n)$. We must choose $\tilde{O}(d)$-clustered subgraphs $G'_{1},\dots,G'_{z}\subseteq G$ where $z=O(\log n)$ such that each path $p\in P$ is covered by some $G'_{i}$. 
\end{problem}

The frameworks of \cite{ashvinkumar2023parallel} and \cite{FischerHLRS25} showed that a deterministic near-linear time algorithm for \Cref{prob:path covering} implies another one for negative-weight SSSP with $m^{1+o(1)}\log(nW)$ time and $O(m\cdot\mathrm{polylog}(nW))$ time, respectively.

\paragraph{Previous Solutions: Either Randomized or Undirected.}

\Cref{prob:path covering} admits a randomized algorithm with near-linear time via the probabilistic \emph{low-diameter decomposition} (LDD), a powerful primitive \cite{bartal1996probabilistic,miller2013parallel} extended to directed graphs by \cite{BernsteinNW25}. For any $d$, an LDD algorithm samples a $\tilde{O}(d)$-clustered subgraph $G'\subseteq G$ that covers each $d$-path with constnat probability. Thus, $z=O(\log n)$ samples $G'_{1},\dots,G'_{z}$ will cover $|P|=\poly(n)$ unknown target paths with high probability, solving \Cref{prob:path covering}. All prior negative-weight SSSP algorithms with near-linear time \cite{BernsteinNW25,BringmannCF23,ashvinkumar2023parallel,li2024bottom,FischerHLRS25} employ probabilistic LDDs, and, hence, are inherently randomized because of it. 

For deterministic algorithms, \Cref{prob:path covering} is much more difficult. Observe that we must cover \emph{all} exponentially many $d$-paths, as any uncovered path can be picked by the adversary into the target set $P$. Thus, even the \emph{existence} of the solution is highly unclear. 

Yet, in undirected graphs, a deterministic near-linear time algorithm follows directly from efficient constructions of \emph{neighborhood covers} \cite{awerbuch1998near}. This very strong property of neighborhood covers has been used for developing many deterministic algorithms in undirected graphs \cite{awerbuch1998near,ghaffari2019improved,chuzhoy2021decremental,chuzhoy2023new,haeupler2024dynamic,chuzhoy2025fully}. For completeness, we define LDD and neighborhood covers, and explain how they give the solution of \Cref{prob:path covering} in either the randomized or undirected settings in \Cref{sec:previous solution}.

This raises a natural question: is there a \emph{deterministic} algorithm on \emph{directed} graphs for \Cref{prob:path covering}? 

\paragraph{A Barrier for Subgraphs: No Deterministic Algorithm on Directed Graphs.}

Unfortunately, unlike undirected graphs, we show that \Cref{prob:path covering} \emph{cannot} be solved deterministically on directed graphs.
In fact, even the \emph{total size} of the $\tilde{O}(d)$-clustered subgraphs must be $\tilde{\Omega}(m\sqrt{m})$ in the worst case. Thus, it takes $\tilde{\Omega}(m\sqrt{m})$ time to even write down these subgraphs.

\begin{restatable}[Barrier for Subgraphs] {thm}{barriar}
\label{thm:barrier}For every $m,\lambda\ge1$, there exists an $O(m)$-edge directed graph $G$ and $d$ such that every collection of $d\lambda$-clustered graphs $G'_{1},\dots,G'_{z}\subseteq G$ covering all $d$-paths in $G$ must have total size $\sum_{i}|E(G'_{i})|=\Omega(m\sqrt{m/\lambda})$. 
\end{restatable}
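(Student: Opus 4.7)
The plan is to construct a directed graph $G$ with $O(m)$ edges and pick $d$ so that any $d\lambda$-clustered covering of $G$'s $d$-paths has total size $\Omega(m\sqrt{m/\lambda})$. My strategy will be a double-counting argument: exhibit a family $\mathcal{P}$ of $d$-paths with $|\mathcal{P}| = \Omega(m\sqrt{m/\lambda})$ such that each $d\lambda$-clustered subgraph $G'$ contains at most $O(|E(G')|)$ paths from $\mathcal{P}$, which immediately yields $\sum_i |E(G'_i)| \geq \Omega(|\mathcal{P}|) = \Omega(m\sqrt{m/\lambda})$.

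For the construction I plan to use a collection of $k$ directed cycles of length $L$, with $kL = \Theta(m)$, glued together at common ``hub'' vertices. The individual cycle length $L$ is chosen larger than $d\lambda$ so that the cluster-diameter constraint is active, and the gluing is arranged so the SCC at each hub in $G$ has diameter larger than $d\lambda$, forcing every $d\lambda$-clustered subgraph $G'$ to ``break'' most cycles near each hub by deleting at least one edge. The family $\mathcal{P}$ consists of $d$-paths that cross between cycles via the hub vertices; there are $\Theta(k^2 d)$ such paths, and balancing the three parameters $k$, $L$, $d$ against $m$ and $\lambda$ is what delivers the target count $\Omega(m\sqrt{m/\lambda})$ while maintaining $d\lambda < L$.

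The main obstacle I anticipate is verifying the per-subgraph bound ``crossings covered $\leq O(|E(G')|)$''. Because the diameter constraint localizes at each hub, the idea is that each crossing path in $G'$ must cash in on a specific choice of how $G'$ breaks the two incident cycles near that hub, and each such choice should consume a constant number of edges of $G'$ that no other crossing within $G'$ can reuse. A single-hub bouquet is too permissive for this—many crossings can share the same near-hub edges, so a single clustered subgraph can already cover a constant fraction of $\mathcal{P}$, giving only an $\Omega(m)$ lower bound—so the construction needs a distributed hub structure (for example, each pair of cycles glued at its own hub vertex, or a grid-of-hubs arrangement) so that crossings going through distinct hubs are separated from each other. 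Establishing this rigidity combinatorially and tuning the parameters so the argument works uniformly in $\lambda$ will be the technical heart of the proof; once it is in hand, the $\Omega(m\sqrt{m/\lambda})$ bound follows from the double-counting above.
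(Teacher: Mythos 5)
Your proposal is a plan rather than a proof: the step you yourself identify as ``the technical heart'' --- that every $d\lambda$-clustered subgraph $G'$ covers at most $O(|E(G')|)$ paths of $\mathcal{P}$ --- is exactly the step that is missing, and the constructions you sketch do not supply it. Take the variant where each pair of cycles is glued at its own hub. Since a $d\lambda$-clustered subgraph only needs every \emph{SCC} to have small diameter, the adversarial subgraph can keep every hub, every cycle minus a single edge, and hence be a DAG (all SCCs are singletons) while still containing all but an $O(d/L)$-fraction of your crossing paths. One such subgraph of size $O(m)$ then covers $\Omega(|\mathcal{P}|)$ paths, your double count degenerates to $\sum_i|E(G'_i)|=\Omega(m)$, and in fact $O(1)$ subgraphs of total size $O(m)$ cover everything. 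The same collapse happens for a grid of hubs. More fundamentally, any scheme of the form ``each covered path consumes $\Omega(1)$ private edges of $G'$'' demands that the mandatory cuts (the edges a clustered subgraph is forced to omit) kill all but $O(|E(G')|)$ of the $\omega(m)$ paths in $\mathcal{P}$; nothing in your construction forces that, because breaking a long cycle at one point leaves a long directed path that is perfectly clustered.

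For comparison, the paper's proof counts something different and weaker, which is why it goes through. Its gadget is a ladder of $L=\Theta(\sqrt{m/\lambda})$ layers, each containing a directed cycle of length $2d\lambda$ (so every clustered subgraph must omit at least one cycle edge \emph{per layer}), preceded by a star of $M=\Theta(m)$ pendant edges; every relevant $d$-path uses one star edge and then threads all $L$ layers. The key lemma is a diagonalization: if a fixed star edge $e_t$ appeared in fewer than $L$ of the subgraphs, one could assign each such subgraph its own layer and route the path through an edge that subgraph omits there, producing an uncovered $d$-path. Hence each star edge lies in at least $L$ subgraphs, and counting edge--subgraph incidences gives $\sum_i|E(G'_i)|\ge M\cdot L=\Omega(m\sqrt{m/\lambda})$. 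Note that here a single subgraph may well cover far more than $|E(G')|$ paths --- the lower bound comes from forced multiplicity of edges across the collection, not from a per-subgraph coverage cap. If you want to rescue your approach, you would need either to prove your rigidity lemma for a concrete construction (which I do not see how to do), or to switch to an incidence-counting argument of this diagonalization type.
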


See \Cref{sec:barrier} for the proof.
This barrier seems bad because 
the only two main tools mentioned above, LDDs and neighborhood covers, naturally only give a collection of $\tilde{O}(d)$-clustered \emph{subgraphs} of $G$ and, hence, must suffer from the barrier of \Cref{thm:barrier}.

\paragraph{Bypassing the Barrier via Projection.}

We bypass the barrier in \Cref{thm:barrier} using a \emph{projection}, instead of subgraphs. A \emph{projection} to $G$ is a graph $G'$ whose vertices of $G'$ are copies of vertices in $G$, and edges of $G'$ may connect the copy $u'$ to $v'$ only if the original vertex $u$ is connected to $v$ in $G$. More formally, there is a mapping $\pi:V(G')\rightarrow V(G)$ where, if $(u',v')\in E(G')$, then $(u=\pi(u'),v=\pi(v'))\in E(G)$.\footnote{That is, $\pi$ is a graph homomorphism.} Intuitively, a projection is formed by ``merging'' subgraphs together. 

Observe that every path $p'=(v'_{1},\dots,v'_{t})$ in $G'$ is mapped to a path $p=(v_{1},\dots v_{t})$ in $G$ where $v_{i}=\pi(v'_{i})$. Here, we say that $p$ is \emph{covered }by $G'$ and $p'$ is a \emph{lift} of $p$. If a projection $G'$ covers all $d$-paths in $G$, then $G'$ is a \emph{$d$-path cover} of $G$. 

As a toy example to illustrate the power of projection, consider a directed unweighted cycle $G$ on $[n]$. 
Observe that a path $G'=(1,\dots,n,1,\dots n)$ is an $n$-path cover of $G$, which is a $0$-clustered graph, i.e., a DAG. In contrast, any collection of $0$-clustered subgraphs $G'_{1},\dots,G'_{z}\subseteq G$ cannot cover any $n$-path of $G$. Otherwise, it would introduce a cycle in $G'_{i}$. See \Cref{fig:projection} for another example where subgraphs require larger total size than a projection.

\begin{figure}

\tikzset{
	dot/.style={circle,draw,minimum size=6pt,inner sep=0pt},
	a/.style={dot,fill=red!80},
	b/.style={dot,fill=green!70!black},
	c/.style={dot,fill=blue!70!black},
	d/.style={dot,fill=black},
	>={Latex}
}

\newcommand{\AB}[3]{%
	\node[a,label=left:{$a$}] (#3a) at (#1,#2) {};
	\node[b,label=left:{$b$}] (#3b) at (#1,#2-0.9) {};
	\draw[->] (#3a) -- (#3b);
	\draw[->] (#3b) -- (#3a);
}
\newcommand{\ABname}[5]{%
	\node[a,label=left:{#4}] (#3a) at (#1,#2) {};
	\node[b,label=left:{#5}] (#3b) at (#1,#2-0.9) {};
	\draw[->] (#3a) -- (#3b);
	\draw[->] (#3b) -- (#3a);
}
\newcommand{\DC}[3]{%
	\node[d,label=right:{$d$}] (#3d) at (#1,#2) {};
	\node[c,label=right:{$c$}] (#3c) at (#1,#2-0.9) {};
	\draw[->] (#3d) -- (#3c);
	\draw[->] (#3c) -- (#3d);
}
\newcommand{\DCname}[5]{%
	\node[d,label=right:{#4}] (#3d) at (#1,#2) {};
	\node[c,label=right:{#5}] (#3c) at (#1,#2-0.9) {};
	\draw[->] (#3d) -- (#3c);
	\draw[->] (#3c) -- (#3d);
}

\newcommand{\ABone}[3]{%
	\node[a,label=left:{$a$}] (#3a) at (#1,#2) {};
	\node[b,label=left:{$b$}] (#3b) at (#1,#2-0.9) {};
	\draw[->] (#3a) -- (#3b);
}
\newcommand{\ABoneName}[5]{%
	\node[a,label=left:{#4}] (#3a) at (#1,#2) {};
	\node[b,label=left:{#5}] (#3b) at (#1,#2-0.9) {};
	\draw[->] (#3a) -- (#3b);
}
\newcommand{\DCcd}[3]{%
	\node[d,label=right:{$d$}] (#3d) at (#1,#2) {};
	\node[c,label=right:{$c$}] (#3c) at (#1,#2-0.9) {};
	\draw[->] (#3c) -- (#3d);
}

\newcommand{\Aonly}[3]{\node[a,label=left:{$a$}] (#3) at (#1,#2) {};}
\newcommand{\Bonly}[3]{\node[b,label=left:{$b$}] (#3) at (#1,#2) {};}
\newcommand{\Conly}[3]{\node[c,label=right:{$c$}] (#3) at (#1,#2) {};}
\newcommand{\Donly}[3]{\node[d,label=right:{$d$}] (#3) at (#1,#2) {};}

\centering{
	\begin{tikzpicture}[x=1.4cm,y=1.4cm]
		
		\begin{scope}[xshift=0cm,yshift=0cm]
			\AB{0}{1}{L}
			\DC{1}{1}{R}
			\draw[->] (Lb) -- (Rc);                 %
			\draw[->] (Rd) -- (La);                 %
			\node at (0.5,-0.2) {$G$};
		\end{scope}
		
		\begin{scope}[xshift=4.5cm,yshift=0cm]
			\ABoneName{0}{1}{Aone}{$a'$}{$b'$}      %
			\DCname{1}{1}{Mid}{$d'$}{$c'$}          %
			\ABname{2}{1}{Atwo}{$a''$}{$b''$}       %
			\node[c,label=right:{$c''$}] (Ctwo) at (3,0.1) {};
			\draw[->] (Aoneb) -- (Midc);            %
			\draw[->,bend left=-35] (Midd) to (Atwoa); %
			\draw[->] (Atwob) -- (Ctwo);            %
			\node at (1.5,-0.2) {$G'$};
		\end{scope}
		
		\begin{scope}[yshift=-2.5cm]
			\begin{scope}[xshift=0cm]
				\ABone{0}{1}{L}
				\DCcd{1}{1}{R}
				\draw[->] (Lb) -- (Rc);               %
				\node at (0.5,-0.2) {$G'_1$};
			\end{scope}
			
			\begin{scope}[xshift=3cm]
				\AB{0}{1}{L}
				\DC{1}{1}{R}
				\draw[->] (Rd) -- (La);               %
				\node at (0.5,-0.2) {$G'_2$};
			\end{scope}
			
			\begin{scope}[xshift=6cm]
				\Aonly{0}{1}{aonly}
				\Bonly{0}{0.1}{bonly}
				\DC{1}{1}{R}
				\draw[->] (bonly) -- (Rc);            %
				\draw[->] (Rd) -- (aonly);            %
				\node at (0.5,-0.2) {$G'_3$};
			\end{scope}
			
			\begin{scope}[xshift=9cm]
				\AB{0}{1}{L}
				\Conly{1}{0.1}{conly}
				\Donly{1}{1}{donly}
				\draw[->] (Lb) -- (conly);            %
				\draw[->] (donly) -- (La);            %
				\node at (0.5,-0.2) {$G'_4$};
			\end{scope}
		\end{scope}
		
	\end{tikzpicture}
}

\caption{At the top right, $G'$ is a $1$-clustered \emph{projection} to $G$ that covers all $3$-paths in $G$.
At the second row,
$G'_1,G'_2,G'_3,G'_4$ are $1$-clustered \emph{subgraphs} of $G$ that cover all $3$-paths in $G$.}
\label{fig:projection}
\end{figure}

As our main technical contribution, we show how projection can bypass the the barrier from \Cref{thm:barrier} for subgraphs: while some graph requires $\tilde{O}(d)$-clustered subgraphs of total size $\tilde{\Omega}(m\sqrt{m})$ to cover all $d$-paths, we show that every graph $G$ admits a $\tilde{O}(d)$-clustered projection $G'$ to $G$ of linear size that covers all $d$-paths.

\begin{thm}
[Covering $d$-Paths with Clustered Projection]\label{thm:path cover intro}Given a graph $G$ with \emph{non-negative} weights and $d\ge0$, there exists a projection $G'$ to $G$ where $G'$ is a $O(d\log^{6}n)$-clustered graph of size $|E(G')|\le(1+\frac{1}{\log n})|E(G)|$ that covers all $d$-paths of $G$. Moreover, $G'$ can be deterministically computed in $O(m\log^{4}n)$ time.
\end{thm}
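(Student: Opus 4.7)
My plan is to build $G'$ through a recursive decomposition that uses projection-based \emph{duplication} in place of the classical subgraph-based \emph{deletion}. The invariant I would maintain is: given a weighted subdigraph $H\subseteq G$, produce a projection $H'$ of size $(1+\alpha_{k})|E(H)|$ that is $O(d\log^{c_{k}}n)$-clustered, where $k$ is the current recursion depth and the parameters are tuned so the claimed bounds hold by the time $k=O(\log n)$. The first reduction is to compute the SCC decomposition: the DAG between SCCs is already $0$-clustered, so it suffices to run the construction on each strongly connected subgraph separately. If an SCC already has diameter at most $O(d\log^{6}n)$, return it unchanged.

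For a strongly connected $H$ of larger diameter, the main task is to find a \emph{sparse balanced separating shell} by deterministic ball-growing. I would pick an arbitrary $v\in V(H)$ and grow the outgoing ball $B_{id}^{\mathrm{out}}(v)$ for $i=0,1,\dots,\Theta(\log^{2}n)$. Since the diameter exceeds $d\log^{6}n\gg d\log^{2}n$, these width-$d$ shells are disjoint and lie strictly inside the SCC; pigeonhole then yields some shell $S=B_{(i^{*}+1)d}^{\mathrm{out}}(v)\setminus B_{i^{*}d}^{\mathrm{out}}(v)$ containing only $O(|E(H)|/\log^{2}n)$ edges. If the resulting split is unbalanced, retry with the incoming ball in the reverse graph. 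Writing $I=B_{i^{*}d}^{\mathrm{out}}(v)$ and $E=V(H)\setminus(I\cup S)$, I build $H'$ as the disjoint union of (a) the recursive projection of $H[I\cup S]$, (b) the recursive projection of $H[S\cup E]$, and (c) a ``tube'' projection that covers any remaining $d$-paths from $E$ into $I$. The forward triangle inequality (using non-negativity of weights) ensures every $d$-path starting in $I$ stays inside $I\cup S$, hence is covered by (a); an analogous reverse-graph argument handles paths ending in $I$; and the residual inward-crossing case goes into the tube (c), which is concentrated within additive distance $d$ of the sparse shell and has size $O(|E(S)|\cdot\polylog(n))$.

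The main obstacle will be meeting the tight size budget $(1+1/\log n)|E(G)|$ \emph{together with} the diameter budget $O(d\log^{6}n)$. Per-level size inflation is $O(m_{H}/\log^{2}n)$, which telescopes to $O(m/\log n)$ across $O(\log n)$ recursion levels. The $\log^{6}n$ slack in the diameter should suffice to absorb the composition of the outer recursion depth, the $\log^{2}n$ ball-growing range, the inner tube-unrolling recursion, and the re-clustering overhead accumulated per level; a tight verification that no single factor exceeds its allotment is the most delicate part of the argument. The deterministic $O(m\log^{4}n)$ runtime then follows by charging each edge to the level at which it is duplicated, running one Dijkstra per subproblem at $O(|E(H)|\log n)$ time, and observing that the shell search is a single Dijkstra sweep. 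Derandomization is automatic: enumerating candidate radii and picking the sparsest shell needs no random choices.
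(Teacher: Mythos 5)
Your proposal captures the right high-level idea—recursive ball-growing with a sparse thin shell, and duplication of the shell into both recursive pieces—but it misses the two technical components that make the paper's construction actually work.

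First, you describe $H'$ as a \emph{disjoint union} of the two recursive projections (a), (b) together with a vaguely specified "tube" (c). If the pieces are genuinely disjoint, a $d$-path that starts in $S\cup E$ and enters $I$ is \emph{not} covered: its prefix lies in (b) and its suffix lies in (a), but there is no edge in $H'$ gluing the lift of the prefix to the lift of the suffix. The paper's \emph{layered projection} with \emph{representatives} (\Cref{def:layered}) is precisely the gluing mechanism, and the representatives are essential to the size bound: for each vertex $u'$ in (b) and each original edge $(\pi(u'),v)$ crossing into $I$, you add exactly one edge from $u'$ to the single representative of $v$ in (a), rather than edges to all copies of $v$. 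Without this, either you fail to cover crossing paths, or you blow up $|E(G')|$ by connecting all copies to all copies. Once the layered projection is in place, the correct covering argument is: the prefix of $p$ before it first enters $I$ lies in $S\cup E$ and is covered by (b); the suffix from the entry point onward lies in $I\cup S$ by the triangle inequality and is covered by (a); these lifts are glued by a layered-projection edge. Your "tube" (c) and the reverse-graph case are unnecessary and not well-defined.

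Second, your handling of balance—"if the resulting split is unbalanced, retry with the incoming ball in the reverse graph"—does not close the case where \emph{both} the forward ball and the backward ball are nearly all of $H$. This is exactly the regime the paper's Case~2 addresses, and it requires a genuinely new idea: when $\deg(\iBo)$ and $\deg(\iBi)$ are both close to $\deg(A)$, the intersection $\iBo\cap\iBi$ is large, and the induced subgraph $\Hm$ on the union of the pruned shortest-path subtrees $\tTo,\tTi$ is a single SCC of bounded diameter, so it needs no recursion at all; you only recurse on the small complements $\bBi$ and $\bBo\cap\iBi$. Your proposal has no analogue of $\Hm$, so the recursion size/time argument does not close in the balanced case. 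Relatedly, your claim of "$O(\log n)$ recursion levels" with telescoping is not justified: in the unbalanced Case~1, the outer recursive piece $\bBo$ can retain almost all of $\deg(A)$, so the depth in terms of degree decrease is not $O(\log n)$; the paper instead proves \Cref{lem:degsize} by induction on $|A|$ with a coefficient controlled by $\log(\deg_G(A))\cdot\eps'/\eps$, which is a different and more careful bookkeeping than telescoping over a bounded number of levels.
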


In other words, $G'$ is a $d$-path cover which is a $\tilde{O}(d)$-clustered graph of size $(1+\frac{1}{\log n})|E(G)|$. Importantly, this size bound is even stronger than that of the known randomized LDD-based approach, which requires the total size of $O(|E(G)|\log n)$.

Finally, we obtain our SSSP algorithm \Cref{thm:main} by showing that the reduction of \cite{ashvinkumar2023parallel} from negative-weight SSSP to \Cref{prob:path covering} extends from subgraphs to projections.
By replacing the randomized LDD with the deterministic projection from  \Cref{thm:path cover intro}, we simultaneously derandomize their algorithm and speed up the time from $m^{1+o(1)}$ to near-linear. 

This speed-up crucially exploits our size bound of $(1+\frac{1}{\log n})|E(G)|$. Thus, when the algorithm makes $O(\log n)$-depth recursion, the total size blow-up is at most a constant factor as $(1+\frac{1}{\log n})^{O(\log n)}|E(G)|=O(|E(G)|)$.
In contrast, the LDD-based approach gives subgraphs of total size $O(|E(G)|\log n)$. The $O(\log n)$ blow-up per level causes their $m^{1+o(1)}$ running time in \cite{ashvinkumar2023parallel}.\footnote{The projection from \Cref{thm:path cover intro} can derandomize \cite{FischerHLRS25}, too. Here, a more relaxed size bound of $|E(G')|=O(|E(G)| \log n)$ is sufficient but 
their framework additionally requires that every SCC of $G'$ contains no duplicated copies of vertices. Although our projection $G'$ \emph{does} satisfy this property, the overall argument is more complicated, and we do not want to impose such restrictions on the definition of path cover. So, we choose to present a derandomization of  \cite{ashvinkumar2023parallel} in this paper.}

Path covers can be viewed as a deterministic counterpart of LDD in directed graphs, similar to how neighborhood cover is a deterministic counterpart of LDD in undirected graphs. Since neighborhood covers have been a powerful tool in undirected graphs, we believe \Cref{thm:main} is only the first of many applications of path covers.

\paragraph{Related works.}
In the setting of  \emph{undirected} graphs, the idea of simplifying a graph $G$ via another graph $G'$, that makes copies of the vertices of the original graph $G$ and connects them together, appeared in the \emph{clan embedding} of \cite{filtser2021clan} and the \emph{copy tree embedding} of
\cite{haepler2022adaptive}.

\paragraph{Independent work.}
In an independent work, Li~\cite{li25_det} also obtains a near-linear time deterministic algorithm for negative-weight shortest paths. He defines a new variant of \emph{padded decomposition} which can be of independent interest. In constructing our path cover, we perform a similar graph decomposition.
\section{Overview}

We start this section by sketching the \emph{existence} of the $d$-path cover $G'$ of $G$ in \Cref{thm:path cover intro}.%

\smallskip
We start with some standard notation: For every vertex $v\in V$ let $\dego_G(v)$ be the out-degree of $v$, i.e., the number of edges leaving $v$. For a vertex set $S\subseteq V$ let $\dego_G(S)=\sum_{v\in S}\dego_G(v)$ be the number of edges starting in $S$. Lastly, for any vertex $u\in V$, define
\[
\Bo_{G}(u,r)=\{v\in V\mid \dist_G(u,v)\le r\}
\]
to be the ball centered at $u$ with radius $r$.

\paragraph{Construction.}

Our algorithm is recursive. 

If the diameter of $G$ is one strongly connected component with diameter $O(d\log^3 n)$, then one can trivially return the whole graph and terminate the recursion. 

Otherwise, the algorithm finds a good partition by picking an arbitrary node $u$ and growing a ball from it by computing $\Bo_G(u,i\cdot d)$ for $i=1,2,3,\ldots$ until we find a ``thin layer'' $i^*$ such that
\begin{equation}\label{eq:overview grow balls}
    \dego_G\!\left(\Bo_G(u,i^*\cdot d)\right)\le \left(1+\frac{1}{\log^2 n}\right)\cdot \dego_G\!\left(\Bo_G(u,(i^*-1)\cdot d)\right)\, .
\end{equation}
Since the ball cannot grow multiplicatively too often it is clear that $i^*=O(\log^3 n)$.
The algorithm works separately and recursively on both
\[
\iBo:=\Bo_G(u,i^*\cdot d)\qquad\text{and}\qquad\bBo:=V\setminus \Bo_G\!\left(u,(i^*-1)\cdot d\right).
\]
Note that $\iBo$ and $\bBo$ only \emph{overlap} at the ``ring'' around $u$ from radius $(i^*-1)\cdot d$ to $i^*\cdot d$.

We focus here on the interesting case that the ball grown from $u$ is not more than a constant fraction of the graph, i.e., $\dego_G\!\left(\iBo\right)\le 0.9m$, our proof shows that otherwise there is a way to carve out a large ball and still make progress. Note that by running Dijkstra's algorithm locally from $u$ the time needed to grow the ball $\iBo$ is $\tO{\dego_G(\iBo)}$ and therefore essentially proportional to its size and independent of the size of $\bBo$. This makes the recursion efficient. Let $\Ho$ and $\bHo$ be $d$-path covers of the induced subgraphs $G[\iBo]$ and $G[\bBo]$, respectively, which are constructed \emph{recursively}. 

For technical reasons, we strengthen the property of $\Ho$ (and likewise for $\bHo$): not only do we require each $d$-path in $G[\iBo]$ to have a lift $p'$ in $\Ho$, we require a \emph{one-to-one} mapping between the start vertices of $p$ and $p'$. In other words, each vertex $v$ in $G[\iBo]$ has a \emph{representative} vertex $v'$ in $\Ho$ such that if a $d$-path in $G$ starts at $v$, then its lift in $G'$ starts at $v'$.

We build the $d$-path cover of $G$ (denoted $G'$) by concatenating $\bHo$ and $\Ho$ as follows: for every $(u,v)\in E$ with $u\in \bBo$ and $v\in V\setminus\bBo$, we add edges from all copies of the vertex $u$ in $\bHo$ to the \emph{representative} of the vertex $v$ in $\Ho$. Having a representative for each vertex avoids having to add prohibitively many edges from all copies of $u$ to all copies of $v$. Defining the representatives of vertices in $G'$ throughout the recursion is straightforward once the path-covering property in the next paragraph is clear.

\begin{figure}[H]
  \centering
  \includegraphics[width=0.5\textwidth]{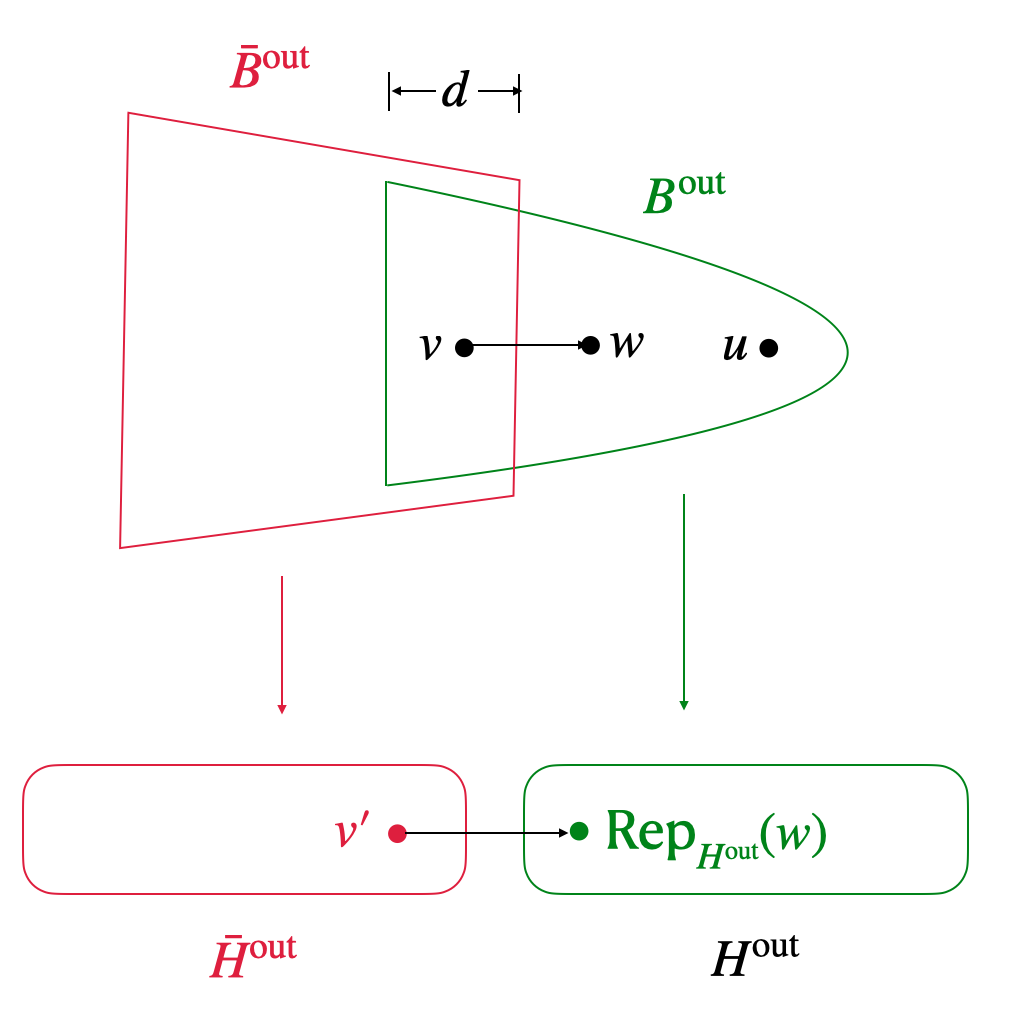} %
  \caption{An illustration of the construction. A path entering $V-\bBo$ (for example a path starting at $w$) will not leave $\iBo$ again because of the length $d$ layer in the middle, thus covered by $\Ho$.}
  \label{fig:example}
\end{figure}

\paragraph{$\tO{d}$-clustered graph.} Notice that if $\bHo,\Ho$ are $\tO{d}$-clustered graphs, then $G'$ is a $\tO{d}$-clustered graph because edges can only be from $\bHo$ to $\Ho$. %

\paragraph{Path Covering.}
We show that $G'$ indeed covers all $d$-paths in $G$.

We consider the case where a path $p$ starts at $v_p\in \bBo$ and leaves $\bBo$ at some point (i.e., enters $V\setminus\bBo$). Other situations (completely inside $\bBo$ or starting in $V\setminus\bBo$) can be handled similarly.
Let $v'_p$ be the first vertex on $p$ that is not in $\bBo$.
Let $p_1$ be the subpath of $p$ before $v'_p$ (excluding $v'_p$) and let $p_2$ be the remainder of $p$ (starting at $v'_p$).

Notice that
\[
v'_p\in V\setminus\bBo
=V\setminus\!\left(V\setminus\Bo_G\!\left(u,(i^*-1)\cdot d\right)\right)
=\Bo_G\!\left(u,(i^*-1)\cdot d\right).
\]
This means that every path starting at $v'_p$ with length at most $d$ (in particular, $p_2$) is contained in
\[
\Bo_G\!\left(u,(i^*-1)\cdot d + d\right)
=\Bo_G\!\left(u,i^*\cdot d\right)
=\iBo.
\]
Therefore $p_1$ lies completely inside $G[\bBo]$ and $p_2$ lies completely inside $G[\iBo]$, so $\bHo$ covers $p_1$ with a lift $p'_1$ and $\Ho$ covers $p_2$ with a lift $p'_2$.

We then concatenate $p'_1$ and $p'_2$ using the edge from the last vertex of $p'_1$ to the first vertex of $p'_2$ (which is the representative of $v'_p$ in $\Ho$). The concatenated path maps to $p$.

\paragraph{Size of the path cover.}
We first show that the size of $G'$ satisfies
\[
|E(G')|\le \sum_{v'\in V(G')} \dego_G\!\bigl(\pi(v')\bigr).
\]
Let $(u',v')$ be an edge concatenating $\bHo$ and $\Ho$, where $u'\in V(\bHo)$ and $v'\in V(\Ho)$ are copies of $u\in \bBo$ and $v\in V\setminus\bBo$, respectively, so that $(u,v)\in E$ and $v'$ is the representative of $v$ in $\Ho$. In this case, we say that \emph{$(u,v)$ contributes to $u'$}.
Notice that (i) $(u,v)$ can contribute to $u'$ at most once among all edges concatenating $\bHo$ and $\Ho$, because we add only one edge from $u'$ to the representative of $v$ in $\Ho$ (which is why we define representatives); and (ii) $(u,v)$ cannot contribute to $u'$ again in the recursive construction of $\bHo$, because $(u,v)$ is not an edge of $G[\bBo]$. Combining these two facts, every vertex $u'$ is contributed to by at most $\dego_G\!\bigl(\pi(u')\bigr)$ edges, which establishes the claimed inequality.

It remains to bound $\sum_{v'\in V(G')} \dego_G\!\bigl(\pi(v')\bigr)$. Intuitively, \Cref{eq:overview grow balls} guarantees that each recursion layer increases the total degree by at most a factor of $\bigl(1+1/\log^2 n\bigr)$ times the smaller side which is sufficient to yields an overall bound of $(1+1/\log n)\cdot m$ on the final total degree (we defer the calculation to the technical section). Since we have $|E(G')|\le \sum_{v'\in V(G')} \dego_G\!\bigl(\pi(v')\bigr)$, we get the desired size bound
\[|E(G')|\le (1+1/\log n)\cdot m\]

\section{Preliminaries}
We use the following notations, $[a]=\{1,2,\dots,a\}$, $\tO{f}=f\cdot \polylog(n)$, where we always use $n,m$ to denote the number of nodes and edges of the input graph. For a function $\pi:V\to V'$, we write $\pi^{-1}(v')=\{v\in V\mid \pi(v)=v'\}$. All algorithms in this paper are deterministic.

\paragraph{Graph.}
All graphs of this paper, unless specified, are directed graphs with edge weights, denoted by $G=(V,E,w)$. We assume edge weights are \emph{integers} within the range $[-W,W]$.
We use standard definitions of graph terminologies for directed graphs which can be found in textbooks, including paths, distances, strongly connected components (SCCs), negative cycles, shortest-path trees, etc. For a cycle or path $H$, we use $w(H)$ to denote the sum of weights of edges on the cycle or path, and $|H|$ to denote the number of edges on the cycle or path (note that a path can have repeated edges; in this case, both the weight and the number of such an edge are counted multiple times). We use $w_{\ge 0}$ to denote the weight truncated at $0$, i.e., $w_{\ge 0}(e)=\max\{0,w(e)\}$ for every $e\in E$. Similarly, we write $G_{\ge 0}=(V,E,w_{\ge 0})$.

In many cases of this paper, a path is not a \emph{simple path}, so a path can have repeated vertices (but usually a shortest path is a simple path). We use $\Start(p)$ and $\End(p)$ to denote the starting vertex and ending vertex of a directed path $p$. We use $p_1\oplus p_2$ to denote the concatenation of two paths: if $p_1=(v_1,\dots,v_z)$ and $p_2=(u_1,\dots,u_{z'})$, then $p_1\oplus p_2=(v_1,\dots,v_z,u_1,\dots,u_{z'})$. Notice that $p_1\oplus p_2$ is a path only if $(v_z,u_1)$ is an edge.

We use $V(G)$ and $E(G)$ to denote the vertex set and edge set of a graph $G$. We use $w_G$ to stress that $w_G$ is the weight function for $G$. We write $\dego_G(u)$ to denote the number of edges $(u,v)$ for $v\in V$, and $\degi_G(v)$ to denote the number of edges $(u,v)$ for $u\in V$. We write $\deg_G(v)=\dego_G(v)+\degi_G(v)$. We write $\dego_G(A)=\sum_{v\in A}\dego_G(v)$, $\degi_G(A)=\sum_{v\in A}\degi_G(v)$, and $\deg_G(A)=\sum_{v\in A}\deg_G(v)$.

\paragraph{Graph distance.}
We use $\dist_G(s,t)$ to denote the distance from $s$ to $t$. When $G$ is clear from the context, we simply write $\dist(s,t)$. We write
\[
\Bo_G(s,d)=\{v\in V(G)\mid \dist_G(s,v)\le d\}
\quad\text{and}\quad
\Bi_G(s,d)=\{v\in V(G)\mid \dist_G(v,s)\le d\}.
\]
We allow distance to be $+\infty$ (in the case of not connected) or $-\infty$ (in the case of reaching a negative cycle) so that every distance is well-defined.

Given a directed graph $G=(V,E,w)$ and a vertex set $A\subseteq V$, the \emph{diameter} of $A$ is defined as $\max_{s,t\in A}\dist_{G[A]}(s,t)$ where $G[A]$ is the induced subgraph of $G$ on $A$. The diameter of $A$ is also the diameter of $G[A]$.

\paragraph{The single-source shortest path problem (SSSP).}
Given a directed graph $G=(V,E)$ with integer (possibly negative) edge weights and a source $s\in V$, either output $\dist_G(s,v)$ for every $v\in V$, or output a negative cycle of $G$. It is folklore that we can get a shortest-path tree from a distance function, and vice versa.

\paragraph{Potential adjustment.}
Given a \emph{potential} $\phi:V\to \bbZ$ and a graph $G=(V,E,w)$, we define an adjusted weight $w_\phi(u,v)=w(u,v)+\phi(u)-\phi(v)$. We define the adjusted graph $G_\phi=(V,E,w_\phi)$.

The following lemma shows the power of potential adjustment. The lemma is folklore and was already used in \cite{Johnson77}.

\begin{lemma}[\cite{Johnson77}]\label{lem:potentialequivalence}
Let $G$ be a graph and $\phi:V\to\bbZ$ be a potential function. If $G$ has no negative cycle, a subgraph is a shortest-path tree of $G$ iff it is a shortest-path tree of $G_\phi$. More precisely, let $d^s_G(u)$ be the distance in $G$ from $s$ to $u$, then
\[
d^s_{G_\phi}(u)=d^{s}_{G}(u)+\phi(s)-\phi(u)\, .
\]
\end{lemma}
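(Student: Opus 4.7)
The plan is to reduce the entire statement to a single elementary telescoping identity. The key observation is that for any walk $p=(v_0,v_1,\dots,v_k)$ in $G$,
\[
w_\phi(p)=\sum_{i=0}^{k-1}\bigl(w(v_i,v_{i+1})+\phi(v_i)-\phi(v_{i+1})\bigr)=w(p)+\phi(v_0)-\phi(v_k),
\]
since the intermediate $\phi(v_i)$ terms cancel in pairs. Once this identity is in hand, everything else is bookkeeping.

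First I would apply the identity to a closed walk: taking $v_0=v_k$ gives $w_\phi(C)=w(C)$ for every cycle $C$. In particular, $G_\phi$ has no negative cycle whenever $G$ has none, so shortest-path distances in $G_\phi$ are well-defined and attained by simple paths, matching the assumption on $G$.

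Next I would fix a pair $(s,u)$ and observe that on the set of $s$-to-$u$ walks, the objective $w_\phi$ differs from $w$ by the \emph{constant} $\phi(s)-\phi(u)$. Minimizing both objectives over this common feasible set yields the distance formula $\dist_{G_\phi}(s,u)=\dist_G(s,u)+\phi(s)-\phi(u)$ and, crucially, shows that the argmin coincides in $G$ and $G_\phi$: a path from $s$ to $u$ is shortest in $G$ if and only if it is shortest in $G_\phi$.

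For the tree equivalence I would unpack ``$T$ is a shortest-path tree of $G$ rooted at $s$'' as the conjunction of two conditions: (i) $T$ is an out-arborescence rooted at $s$ spanning all vertices reachable from $s$, which is independent of weights; and (ii) for every vertex $v\in V(T)$, the unique $s$-to-$v$ path in $T$ realizes $\dist_G(s,v)$. Applying the preceding paragraph to each such root-to-$v$ path shows that condition (ii) holds in $G$ iff it holds in $G_\phi$, which gives the \emph{iff} part of the statement. The only real obstacle is bookkeeping: being careful that the $\phi$-shift is constant only when the two endpoints are fixed, and that the notion of shortest path is well-defined on both sides; there is no genuinely hard step.
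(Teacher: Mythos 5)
Your proof is correct, and it is the standard telescoping argument for this folklore lemma. The paper does not supply its own proof (it cites \cite{Johnson77} and treats the statement as folklore), so there is no alternative paper proof to compare against; your derivation of the constant-shift identity $w_\phi(p)=w(p)+\phi(\Start(p))-\phi(\End(p))$, the consequent cycle invariance, and the argmin/tree equivalence is exactly the canonical route.
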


\newcommand{\PC}{\textsc{PathCover}}

\section{Path Cover via a Clustered Graph}
In this section, we present a method for covering every bounded-length path in a directed graph using a clustered graph.  
We begin with the necessary definitions. All graphs in this section have non-negative edge weights.

\subsection{Definitions and the Main Theorem}

We first formally define graph projections introduced in the introduction.

\begin{definition}[Graph Projections]\label{def:inherited}
    Let $G=(V,E,w)$ be a graph.  
    A \emph{projection onto $G$} is a graph $G'=(V',E',w')$ together with a projection map $\pi=\pi_{G'}$ where $\pi$ is a \emph{weight-preserving graph homomorphism}: for every $(x,y)\in E(G')$, we have $(\pi(x),\pi(y))\in E(G)$ and $w'(x,y)=w(\pi(x),\pi(y))$.
\end{definition}
When we refer to a projection $G'$, we implicitly assume its map $\pi_{G'}$.

\begin{definition}[Path Covering]\label{def:pathcovering}
Given a path $p=(v_1,\dots,v_z)$ in $G$, we say that a projection $(G',\pi)$ onto $G$ \emph{covers} $p$ if there exists a path $p'=(v'_1,\dots,v'_z)$ in $G'$ such that $\pi(v'_i)=v_i$ for every $i\in[z]$. We call $p'$ the \emph{lift} of $p$.
The projection is \emph{$d$-path-covering} if every simple path in $G$ of length at most $d$ is covered.
\end{definition}

Our goal is to cover every simple path of length at most~$d$ by a projection whose SCCs have small diameter.  
For this purpose, we define clustered graphs as follows.

\begin{definition}[Clustered Graphs]
    We say that a directed graph $G'$ is a \emph{$d$-clustered graph} if every strongly connected component (SCC) of $G'$ has diameter at most $d$.  
\end{definition}

\begin{definition}[Path Cover]\label{def:pathcover}
    A \emph{$d$-path cover} of a directed graph $G=(V,E,w)$ with \emph{diameter slack}~$\lambda$ is a $d$-path-covering projection $G'$ where $G'$ is a $\lambda\cdot d$-clustered graph.    
\end{definition}

We will prove the following theorem in this section.

\begin{theorem}[Path Cover via Clustered DAGs]\label{thm:pathcover}
There exists a deterministic algorithm that, given a directed graph $G=(V,E,w)$ with non-negative edge weights, a diameter parameter $d$, and a slack parameter $\lambda \ge 10000\log^4 n$, constructs a $d$-path cover $G'$ of $G$ with diameter slack $\lambda$ and size at most 
\[
\left(1+\tfrac{100\log^2 n}{\sqrt{\lambda}}\right)\cdot m,
\]
in $O(\sqrt{\lambda}\cdot m\log n)$ time.

Moreover, we have
\[
\sum_{v\in V(G')}\deg_G(\pi(v))\le \left(1+\tfrac{100\log^2 n}{\sqrt{\lambda}}\right)\cdot m.
\]
\end{theorem}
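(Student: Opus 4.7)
\textbf{Proof proposal for \Cref{thm:pathcover}.}

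The plan is to follow the recursive construction sketched in \Cref{sec:overview}, but with parameters tuned to the slack $\lambda$ rather than to $\log^6 n$. I will maintain an inductive invariant stronger than the theorem statement: in addition to producing a $d$-path cover $G'$ with the required clustering and size bounds, every $v\in V(G)$ will be assigned a distinguished \emph{representative} copy $\mathrm{rep}(v)\in \pi_{G'}^{-1}(v)$ such that every $d$-path in $G$ starting at $v$ lifts to a path in $G'$ starting at $\mathrm{rep}(v)$. The representative discipline is what makes concatenation edges linear rather than quadratic, so the whole argument rests on it.

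First I would set up the base case: if $G$ is strongly connected with diameter at most $\lambda d$, return $G$ with the identity projection and $\mathrm{rep}(v)=v$. Otherwise, pick any vertex $u$ and run an incremental Dijkstra to grow $\Bo_G(u,i\cdot d)$ for $i=1,2,\ldots$, stopping at the first $i^\star$ satisfying the thin-layer condition
\[
\deg_G\!\bigl(\Bo_G(u,i^\star d)\bigr)\le \bigl(1+\tfrac{1}{\sqrt{\lambda}}\bigr)\cdot \deg_G\!\bigl(\Bo_G(u,(i^\star-1)d)\bigr).
\]
Since the total degree is monotone and bounded by $2m$, and each non-thin layer multiplies it by at least $1+1/\sqrt{\lambda}$, we must find $i^\star=O(\sqrt{\lambda}\log n)$. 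The assumption $\lambda\ge 10000\log^4 n$ gives $i^\star\cdot d \le \lambda d$, so the outer-ball radius stays within the clustering budget. Set $\iBo:=\Bo_G(u,i^\star d)$ and $\bBo:=V\setminus \Bo_G(u,(i^\star-1)d)$. If $\deg_G(\iBo)>0.9\cdot\deg_G(V)$, recurse only on $G[\iBo]$ (which is strictly smaller since we are not in the base case); otherwise recurse on both $G[\iBo]$ and $G[\bBo]$, obtaining $\Ho$ and $\bHo$. Form $G'$ by taking their disjoint union and, for every $G$-edge $(x,y)$ with $x\in \bBo$ and $y\in V\setminus \bBo$ and every copy $x'\in \pi_{\bHo}^{-1}(x)$, adding a single bridge edge $(x',\mathrm{rep}_{\Ho}(y))$. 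Representatives in $G'$ are inherited in the natural way.

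The $d$-path covering property follows by the case split from the overview: any $d$-path $p$ is either entirely inside $\bBo$ (covered by $\bHo$), entirely inside $\iBo$ (covered by $\Ho$), or crosses; in the crossing case, let $v_p'$ be the first vertex on $p$ outside $\bBo$, so $v_p'\in \Bo_G(u,(i^\star-1)d)$, hence the remaining suffix of length $\le d$ stays within $\iBo$. Induction and the representative invariant give lifts of the prefix and suffix which concatenate through the bridge edge $(\cdot,\mathrm{rep}_{\Ho}(v_p'))$. Clustering is immediate: every edge of $G'$ lies within $\bHo$, within $\Ho$, or crosses from $\bHo$ to $\Ho$, so every cycle, and hence every SCC, lies inside one of the two recursive pieces, inheriting the $\lambda d$-diameter bound.

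The main obstacle is the size bound, and this is where I would spend the most care. I strengthen the induction to the second inequality in the theorem, $S(G):=\sum_{v'\in V(G')}\deg_G(\pi(v'))\le (1+100\log^2 n/\sqrt{\lambda})\cdot \deg_G(V)$. Then $|E(G')|\le S(G)$ follows from the charging argument in the overview (each concatenation edge from $x'$ uses distinct original $G$-edges, and those $G$-edges are absent from $G[\bBo]$ so cannot be double-charged in the recursion). For the recursive inequality, observe that vertices of $G'$ are partitioned into copies coming from $\Ho$ and from $\bHo$, and the only vertices whose $\deg_G$ gets counted in \emph{both} recursive calls are those in the ring $\iBo\cap \bBo$. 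Because the ring is exactly $\Bo_G(u,i^\star d)\setminus \Bo_G(u,(i^\star-1)d)$, the thin-layer condition bounds its total $\deg_G$ by $\tfrac{1}{\sqrt{\lambda}}\deg_G(\iBo)\le \tfrac{1}{\sqrt{\lambda}}\cdot 0.9\cdot\deg_G(V)$, i.e., by $\tfrac{1}{\sqrt{\lambda}}\min\{\deg_G(\iBo),\deg_G(\bBo)\}$ up to constants. Expanding $S(G)\le S(G[\iBo])+S(G[\bBo])$ with the smaller-side overhead $\tfrac{1}{\sqrt{\lambda}}\min\{\deg_G(\iBo),\deg_G(\bBo)\}$, the $\min$-charging ensures the total extra summed across the whole recursion tree is at most $O(\log n/\sqrt{\lambda})\cdot\deg_G(V)$ via the standard $O(m\log n)$ bound on $\sum\min$-costs of a binary partition, which combined with the $O(\log n)$ recursion depth yields the advertised $(1+100\log^2 n/\sqrt{\lambda})$ factor.

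Finally, the running time: growing the ball up to level $i^\star$ costs $\tilde O(\deg_G(\iBo))$ by a local Dijkstra, proportional only to the smaller side, so the overhead per level sums to $O(m\log n)$; multiplied by the $O(\sqrt{\lambda})$ cost of inspecting up to $O(\sqrt{\lambda}\log n)$ layers before finding a thin one, this yields the claimed $O(\sqrt{\lambda}\cdot m\log n)$ bound.
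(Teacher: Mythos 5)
Your recursive skeleton — grow a forward ball, cut at a thin layer, recurse on both sides, glue through representatives — matches the paper's Case~1, and your treatment of the path-covering invariant and the clustering argument is essentially correct in that case. But your handling of the large-ball case is broken, and that case is where the paper does most of its work.

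Concretely, when $\deg_G(\iBo)>0.9\cdot\deg_G(V)$ you write ``recurse only on $G[\iBo]$ (which is strictly smaller since we are not in the base case).'' Neither half of that parenthetical holds. First, $G[\iBo]$ alone is not a $d$-path cover of $G$: any $d$-path starting outside $\iBo$ is uncovered, so unless $\iBo=V$ the output is simply wrong. Second, $\iBo$ can equal $V$ without $G$ being in your base case (strongly connected with diameter $\le\lambda d$): take two vertices with a single edge $a\to b$, grow the ball from $a$, and you get $\iBo=V$, $\bBo=\emptyset$, while $G$ is not strongly connected. Your recursion would call itself on the same graph forever. The difficulty is real: a forward ball from an arbitrary $u$ can swallow nearly the whole graph while leaving the graph far from $\lambda d$-clustered, so there is no one-sided escape.

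The paper's Case~2 resolves exactly this. It runs forward and backward ball-growing from $u$ \emph{interleaved} (one memory access each, alternating), and when the ball that stops first is large, it finishes the other ball, forms the ``middle'' piece $\Hm=G[V(\tTo)\cup V(\tTi)]$ out of the two truncated shortest-path trees (which is a single SCC of diameter $\le\lambda d$ and is taken as a leaf of the recursion, not recursed on), and recurses only on $\bBo\cap\iBi$ and $\bBi$, which by \Cref{lem:largemid} each have degree at most $2\eps\deg_G(A)$ — strictly and quantifiably smaller. The interleaving is also what makes the ball-growing cost charge to the smaller side for the time bound; your unidirectional Dijkstra does not guarantee that. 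You would need to import this entire bidirectional mechanism: there is no way to patch the one-sided recursion. (Your thin-layer threshold $1+1/\sqrt{\lambda}$ versus the paper's $1+9\log n/\lambda$, and your strong-connectivity base case versus the paper's $|A|=1$, are secondary differences that could be reconciled, but the large-ball case is a genuine gap.)
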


To get \Cref{thm:path cover intro}, we only need to set $\lambda=10000\log^6n$.

\begin{remark}
    The running time can be improved to $O(m\log^2 n)$, independent of $\lambda$, if we relax the structure of the path cover so that we only guarantee \emph{weak diameter w.r.t.~$G$}. More precisely, for each strongly connected component $U$ in $G'$, $\max_{u,v \in U}\dist_G(\pi(u),\pi(v))\le d\lambda$. This weaker guarantee is sufficient for our negative-weight shortest path algorithms, and we can improve the running time from $O(m\log^8 n \log(nW))$ to $O(m\log^5 n \log(nW))$ in this way.
    However, we choose to construct the path cover with a strong diameter guarantee to ease the understanding of this new concept.
\end{remark}

\subsection{The Algorithm}\label{subsec:pathcoveralg}

In this section, we describe the algorithm for \Cref{thm:pathcover}. Let $G=(V,E,w)$ be the input directed graph and $d$ be the input diameter parameter.

\newcommand{\led}{\mathrm{rep}}
The following definition is useful when describing algorithms. 
Let $G'$ be a projection onto $G$. We say a vertex $v\in V(G)$ is \emph{present} (in the projection) if $\pi^{-1}(v)\neq\emptyset$.
We say the projection has \emph{representatives} if every present vertex $v\in V(G)$ has a unique representative $\led_{G'}(v)\in V(G')$ with $\pi_{G'}(\led_{G'}(v))=v$. In this case, we say $\led_{G'}(v)$ is a \emph{representative}.\footnote{The representative in our case is basically the first appearance of the copy of $v$ in $G'$. First is defined on the SCC order of $G'$ (we will have the guarantee that each SCC is a subgraph of $G$, which does not have repeated vertices).}

\begin{definition}[Layered Projection]\label{def:layered}
    Let $G_i$ be projections onto $G$ with representatives for $i=1,\dots,z$. 
    A \emph{layered projection} $G' = \mathsf{Layer}((G_1,\dots,G_z)\rightarrow G)$ induced from $(G_1,\dots,G_z)$ onto $G$ with representatives is constructed as follows. First, initialize $G'$ as a disjoint union of $G_1,\dots,G_z$ with the same projection map on vertices\footnote{When $G_i$ and $G_j$ share the same vertex $v$, we make different copies of $v$ in $G'$.}. For every $v\in V(G)$, suppose $v$ is present in $G_i$ where $i$ is the smallest such index, select the representative $\led_{G'}(v)$ to be $\led_{G_i}(v)$.
    Then, for every two vertices $u'\in V(G_i)$ and $v'\in V(G_j)$, if $i<j$, $(\pi_{G_i}(u'),\pi_{G_j}(v'))\in E$ and $v'$ is a representative in $G'$, we add the edge $(u',v')$ to $G'$ with weight $w(\pi_{G_i}(u'),\pi_{G_j}(v'))$.
\end{definition}

The algorithm is recursive. For convenience, we use $\PC(A)$ to denote the subroutine that takes a vertex set $A\subseteq V$ as input and returns a $d$-path cover of $G[A]$ with representatives. It suffices to call $\PC(V)$ to finish \Cref{thm:pathcover}.

We define the following parameters:
\[
\eps=\frac{1}{\sqrt{\lambda}},\qquad \eps'= \frac{9\log n}{\lambda}.
\]

\paragraph{Base case.} When $|A|=1$, we simply return $G[A]$ with the trivial projection map and representative.

\paragraph{Recursive step.} 
Pick an arbitrary vertex $u$ in $A$. We execute the following two procedures in an interleaved fashion: we perform one memory access of each procedure in turn, alternating between them, until one of them terminates. As soon as one procedure stops, the entire process halts. 

\newcommand{\Ii}{i^{\mathrm{in}}}
\newcommand{\Io}{i^{\mathrm{out}}}
\begin{enumerate}
    \item \textbf{(Forward growing)} Run Dijkstra's algorithm to find the smallest $\Io \in \mathbb{N}_{>0}$ such that
    \[
    \deg_G(\Bo_{G[A]}(u, \Io \cdot d)) \le \left(1 + \eps' \right) \cdot \deg_G(\Bo_{G[A]}(u, (\Io-1) \cdot d)).
    \]
    Such an $\Io$ must exist because for sufficiently large $i$, we have $\Bo_{G[A]}(u, i \cdot d) = G[A]$.

    To be precise, Dijkstra's algorithm fixes the distances from $u$ to other nodes in increasing order. We run it only up to the point where the distance exceeds $i \cdot d$, in order to compute $\Bo_{G[A]}(u, i \cdot d)$. This can take sublinear time because the algorithm stops once the desired index $\Io$ is found.

    For technical reasons, we assume that when Dijkstra's algorithm fixes the distance of a node $v$, it performs $\deg_G(v)$ memory accesses, i.e., it looks up all adjacent edges of $v$ in $G$ (but only performs $\dego_{G[A]}(v)$ relaxations). This setting is purely for the convenience of analysis.
    
    \item \textbf{(Backward growing)} Identical to the forward growing procedure, but with all occurrences of `$\mathrm{out}$' replaced by `$\mathrm{in}$'.
\end{enumerate}

For simplicity, we use the following notation. Notice that the algorithm only knows $\iBo,\bBo$ but does not know $\iBi,\bBi$ if forward growing stops earlier than backward growing, and vice versa: 
\begin{align*}
\iBo &= \Bo_{G[A]}(u,\Io\cdot d), &
\bBo &= A - \Bo_{G[A]}(u,(\Io-1)\cdot d),\\
\iBi &= \Bi_{G[A]}(u,\Ii\cdot d), &
\bBi &= A - \Bi_{G[A]}(u,(\Ii-1)\cdot d).
\end{align*}

\newcommand{\To}{T^{\mathrm{out}}}
\newcommand{\tTo}{\tilde{T}^{\mathrm{out}}}
\newcommand{\Ti}{T^{\mathrm{in}}}
\newcommand{\tTi}{\tilde{T}^{\mathrm{in}}}
Without loss of generality, assume forward growing stops earlier than backward growing (the other case is handled symmetrically). Consider two cases.
\begin{enumerate}
	\item \textbf{Case 1: $\deg_G(\iBo)< \left(1-\eps\right)\deg_G(A)$.} In this case, let 
    \[
    \Ho\leftarrow \PC\!\left(\iBo\right),\qquad
    \bHo\leftarrow \PC\!\left(\bBo\right).
    \]
    Let $G'$ be the layered projection induced from $(\bHo,\Ho)$ (\Cref{def:layered}). Return $G'$ as the output of $\PC(A)$. 
    
	\item \textbf{Case 2: $\deg_G(\iBo)\ge (1-\eps)\deg_G(A)$.} Continue the backward growing until it stops. Now the algorithm also knows $\iBi,\bBi$.
    Let $\To$ be the shortest-path tree in $G[\iBo]$ with root $u$, and let $\Ti$ be the reversed shortest-path tree in $G[\iBi]$ with root $u$ (i.e., $\Ti$ contains the shortest paths from every node in $\iBi$ to $u$).

    Let 
    \[
    M\leftarrow \iBo\cap\iBi.
    \]
    Let $\tTo$ be the subtree of $\To$ that contains all nodes in $V(\To)$ that can reach $M$ in $\To$.
     Similarly, define $\tTi$ to be the subtree of $\Ti$ that contains all nodes in $V(\Ti)$ that can be reached from $M$ in $\Ti$. 
    
    Define
	\[
 \Hm\leftarrow G[V(\tTo)\cup V(\tTi)].
 \]
    Notice that $\Hm$ is a subgraph of $G$, so a projection map and representatives can be defined trivially.
    
    Make the following recursive calls:
    \[
    \bHi\leftarrow \PC\!\left(\bBi\right),\qquad
    \tHo\leftarrow \PC\!\left(\bBo\cap \iBi \right).
    \]
    Let $G'$ be the layered projection induced from $(\tHo,\Hm,\bHi)$. Return $G'$ as the output of $\PC(A)$.
\end{enumerate}

For a graph illustration, see \Cref{fig:example} which shows Case 1. Case 2 can be viewed as after applying Case 1 to get $(\Hi,\bHi)$, we recursively build $\Hi$ by $(\tHo,\Hm)$. 
\subsection{Correctness}\label{subsec:pathcovercorrectness}

We prove the correctness of the algorithm described in \Cref{subsec:pathcoveralg}. Specifically, $\PC(A)$ returns a $d$-path cover of $G[A]$. We proceed by induction on $|A|$.

\paragraph{Base case.} Suppose $|A|=1$. The only SCC has diameter $0$. Thus, $G[A]$ is a $d$-clustered graph. Moreover, the trivial projection map certifies that all paths are covered.

\paragraph{Induction step.} Assume $\PC$ is correct for every input size smaller than $|A|$. We first show the easy part: $G'$ is a $\lambda\cdot d$-clustered graph. We need the following lemma.

\begin{lemma}\label{lem:sizeofi}
    Both $\Io$ and $\Ii$ are at most $\lambda/4$.
\end{lemma}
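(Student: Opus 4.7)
The plan is to prove $\Io \le \lambda/4$ via a direct exponential-growth argument on the ball degrees; the bound $\Ii \le \lambda/4$ will then follow from the identical argument applied to the symmetric backward-growing procedure (with out-balls replaced by in-balls).

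First I will exploit the minimality of $\Io$: for every $i \in \{1, \ldots, \Io-1\}$ the defining inequality of $\Io$ fails, giving the strict growth
\[
\deg_G\!\left(\Bo_{G[A]}(u, i\cdot d)\right) > (1+\eps')\cdot \deg_G\!\left(\Bo_{G[A]}(u, (i-1)\cdot d)\right).
\]
Iterating this chain of multiplicative increases starting from $\Bo_{G[A]}(u, 0) = \{u\}$ yields $\deg_G(\Bo_{G[A]}(u, (\Io-1)\cdot d)) > \deg_G(u) \cdot (1+\eps')^{\Io-1}$. Then I dispose of the boundary case: if $\deg_G(u) = 0$ then $u$ is isolated in $G$, the ball never grows, and $\Io = 1 \le \lambda/4$ trivially. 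Otherwise $\deg_G(u) \ge 1$, and the trivial upper bound $\deg_G(A) \le \deg_G(V) = 2m \le 2n^2$ forces $(1+\eps')^{\Io-1} \le 2n^2$.

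Finally I take logarithms and substitute $\eps' = 9\log n/\lambda$. Using the estimate $\log(1+\eps') = \Theta(\eps')$, which is valid since the hypothesis $\lambda \ge 10000\log^4 n$ makes $\eps' \ll 1$, I obtain $\Io - 1 \le (1+o(1))\cdot \log(2n^2)/\eps' = O(\lambda)$. The leading coefficient here is essentially $2\log n/(9\log n/\lambda) = 2\lambda/9$, which is comfortably less than $\lambda/4$; the lower-order discrepancies (the $\log 2$ term from $\log(2n^2)$ versus $\log n$, and the second-order term from $\log(1+\eps')$) are absorbed by the generous slack provided by $\lambda \ge 10000\log^4 n$.

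The argument is conceptually routine; the only care required is keeping the numerical constants sharp enough to match the specific bound $\lambda/4$ rather than a weaker $O(\lambda)$ bound. I do not expect any real obstacle, only bookkeeping.
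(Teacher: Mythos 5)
Your proof is correct and follows essentially the same route as the paper's: both telescope the strict multiplicative-growth inequalities for $i<\Io$ (the minimality of $\Io$) to get $(1+\eps')^{\Io-1}$ bounded by a polynomial in $n$, then take logarithms and substitute $\eps'=9\log n/\lambda$. You spell out the bookkeeping (the base case $\deg_G(u)\ge 1$, the $2n^2$ upper bound, the estimate $\log(1+\eps')=\Theta(\eps')$) that the paper compresses into a single line invoking $\lambda\ge 10000\log^4 n$, but there is no substantive difference.
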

\begin{proof}
    For every $i<\Io$, by the definition of $\Io$,
\[
    \deg_G(\Bo_{G[A]}(u, i \cdot d)) > \left(1 + \eps' \right) \cdot \deg_G(\Bo_{G[A]}(u, (i-1) \cdot d)).
\]
Combining these inequalities for every $i<\Io$ gives
\[
m\ge \deg_G(\Bo_{G[A]}(u, \Io \cdot d)) >\left(1 + \eps' \right)^{\Io-1}
=\left(1+\frac{9\log n}{\lambda}\right)^{\Io-1}.
\]
Since $\lambda\ge 10000\log^4 n$, we obtain $\Io\le \lambda/4$. The same proof holds symmetrically for $\Ii$. 
\end{proof}

\paragraph{Bounding diameter.}
To show $G'$ is a $\lambda\cdot d$-clustered graph, let $C$ be an SCC of $G'$; we must show $\mathrm{diam}(C)\le \lambda\cdot d$. If the algorithm goes to Case~1, then $G'$ is the layered projection induced from $(\bHo,\Ho)$. By \Cref{def:layered}, there are only edges from $\bHo$ to $\Ho$ but not backwards, so $C$ is either completely inside $\bHo$ or inside $\Ho$; in either case, $C$ has diameter at most $\lambda\cdot d$ by the induction hypothesis. If the algorithm goes to Case~2, then $G'$ is the layered projection induced from $(\tHo,\Hm,\bHi)$. Similarly, if $C$ is completely inside $\tHo$ or $\bHi$, then $C$ has diameter at most $\lambda\cdot d$ by the induction hypothesis. If $C$ is inside $\Hm$, we show that $\Hm$ is strongly connected and has diameter at most $\lambda\cdot d$. Recall
\[
\Hm\leftarrow G[V(\tTo)\cup V(\tTi)],
\]
where $\tTo$ is the subtree of $\To$ whose every node can reach $W=\iBo\cap \iBi$, and $\tTi$ is the subtree of $\Ti$ whose every node can be reached from $W$.
Recall also
\[
\iBo = \Bo_{G[A]}(u,\Io\cdot d),\qquad
\iBi = \Bi_{G[A]}(u,\Ii\cdot d),
\]
and $\To,\Ti$ are shortest-path trees in $G[\iBo]$ and $G[\iBi]$. By \Cref{lem:sizeofi}, each shortest-path tree has depth at most $(\lambda/4)\cdot d$. Thus, every node in $\tTo$ can reach $W$ with distance at most $(\lambda/4)\cdot d$ in $\tTo$; every node in $\tTi$ can be reached from $W$ with distance at most $(\lambda/4)\cdot d$ in $\tTi$. Similarly, $u$ can reach every node in $\tTo$ within distance at most $(\lambda/4)\cdot d$ in $\tTo$, and $u$ can be reached from every node in $\tTi$ within distance at most $(\lambda/4)\cdot d$ in $\tTi$. We conclude that the diameter of $\Hm$ is at most $\lambda\cdot d$.

\paragraph{Covering all length-$d$ paths.}
Next we show that $G'$ is a $d$-path cover of $G[A]$. Let $p$ be an arbitrary path of $G[A]$ with length at most $d$. We prove that $G'$ covers $p$ (see \Cref{def:pathcovering}). Moreover, we prove a stronger statement: there exists a lift $p'$ of $p$ such that $\Start(p')=\led(\Start(p))$.\thatchaphol{This is the important property that will give a lot of intuition/motivation why we define representation. It should be highlight much earlier (not buried here). It should be in both the overview and after you introduce the notion of representation in the body.}

Since $G'$ depends on whether the algorithm goes to Case~1 or Case~2, we prove the two cases separately.

\paragraph{Induction step (Case 1).} Here $G'$ is the layered projection induced from $(\bHo,\Ho)$ where 
\begin{itemize}
    \item $\bHo$ is a path cover of $G[\bBo]$, and
    \item $\Ho$ is a path cover of $G[\iBo]$.
\end{itemize}

If $p$ is completely contained in $G[\bBo]$, then by the induction hypothesis, $p$ is covered by $G'$. Moreover, the lift $p'$ of $p$ has $\Start(p')=\led_{\bHo}(\Start(p))$. By \Cref{def:layered}, $\led_{\bHo}(\Start(p))=\led_{G'}(\Start(p))$ since $\bHo$ precedes $\Ho$. Thus, $\Start(p')=\led_{G'}(\Start(p))$.

Now assume $p$ is not completely contained in $G[\bBo]$, so there exists a vertex on $p$ that lies in $A-\bBo$. Let $a$ be the first such vertex on $p$. 

Let $p_1$ be the subpath of $p$ before vertex $a$ (excluding $a$; possibly empty), and let $p_2$ be the subpath of $p$ after vertex $a$ (including $a$). 
By definition, $p_1$ is contained in $G[\bBo]$, so by induction, $p_1$ is covered by $\bHo$. 

We now show that $p_2$ is contained in $G[\iBo]$. Recall
\[
A-\bBo = \Bo_{G[A]}(u,(\Io-1)\cdot d),\qquad
\iBo = \Bo_{G[A]}(u,\Io\cdot d). 
\]
Thus, if $a\in A-\bBo$, any path of length at most $d$ starting at $a$ is contained in $\iBo$.

Hence $p_1$ is covered by $\bHo$ and $p_2$ is covered by $\Ho$, with lifts $p'_1,p'_2$ respectively. We have $\pi(\End(p'_1))=\End(p_1)$ and $\Start(p'_2)=\led_{\Ho}(\Start(p_2))$. 
Since $\Start(p_2)=a\in A-\bBo$, $a$ is not present in $\bHo$, so $\led_{G'}(a)=\led_{\Ho}(a)$ and $\Start(p'_2)=\led_{G'}(a)$. 
By \Cref{def:layered}, there is an edge $(\End(p'_1),\Start(p'_2))$ in $G'$, so $p'_1\oplus p'_2$ is a path in $G'$ and is a lift of $p=p_1\oplus p_2$.

If $p'_1$ is non-empty, then $\Start(p'_1)=\led_{\bHo}(\Start(p_1))$ by induction, hence $\Start(p'_1\oplus p'_2)=\led_{G'}(\Start(p))$ since $\bHo$ precedes $\Ho$ in $G'$. Otherwise, $p'_1$ is empty, so $p$ starts at $a\in V-\bBo$. In this case, since $a$ is not present in $\bHo$, we must have $\led_{G'}(a)=\led_{\Ho}(a)=\Start(p'_2)$, where the second equality is by induction.

\paragraph{Induction step (Case 2).} Here $G'$ is the layered projection induced from $(\tHo,\Hm,\bHi)$, where 
\begin{itemize}
    \item $\tHo$ is a path cover of $G[\bBo\cap\iBi]$,
    \item $\Hm\supseteq  G[\iBo\cap\iBi]$,
    \item $\bHi$ is a path cover of $G[\bBi]$.
\end{itemize}

If $p$ is completely contained in $G[\bBo\cap\iBi]$, then $p$ is covered by $G'$. Moreover, since $\tHo$ precedes $\Hm,\bHi$, we have $\led_{G'}(\Start(p))=\led_{\tHo}(\Start(p))$, which equals the starting vertex of the lift of $p$ in $G'$.

Otherwise, there is a vertex of $p$ in $A-(\bBo\cap\iBi)$; let $a$ be the first such vertex. Let $p_1$ be the subpath of $p$ before $a$ (excluding $a$; possibly empty), and let $p_2$ be the subpath of $p$ starting at $a$. Then $p_1\subseteq G[\bBo\cap\iBi]$, so $p_1$ is covered by $\tHo$ with lift $p'_1$. If $p_1$ is non-empty, then $\led_{G'}(\Start(p))=\led_{\tHo}(\Start(p))=\Start(p'_1)$ by induction.

We first show that $p_2$ is completely contained in $V(\Hm)\cup\bBi$. Suppose Recall
\begin{align*}
A-(\bBo\cap\iBi)&\subseteq (A-\bBo)\cup (A-\iBi)=\Bo_{G[A]}(u,(\Io-1)\cdot d)\cup (A-\Bi_{G[A]}(u,\Ii\cdot d)),\\
V(\Hm)\cup \bBi&\supseteq (\iBo\cap \iBi)\cup \bBi\supseteq \Bo_{G[A]}(u,\Io\cdot d)\cup(A-\Bi_{G[A]}(u,(\Ii-1)\cdot d)).
\end{align*}
Hence, if $a\in A-(\bBo\cap\iBi)$ and $p_2$ has length at most $d$, it is contained in $V(\Hm)\cup \bBi$. We now argue within $V(\Hm)\cup \bBi$.

If $p_2$ is completely contained in $V(\Hm)$, then $p_2$ is covered by $\Hm$ trivially (as $\Hm$ is an induced subgraph of $G$) with a lift $p'_2$. Moreover, $\led_{G'}(a)=\led_{\Hm}(a)=a=\Start(p'_2)$ since $a$ is not present in $\tHo$.

Otherwise, $p_2$ has a vertex in $A-(\bBo\cap\iBi)-V(\Hm)$; let $b$ be the first such vertex. Let $p_{2,1}$ be the subpath of $p_2$ before $b$ (excluding $b$; possibly empty) and let $p_{2,2}$ be the subpath of $p_2$ from $b$ onward. Then $p_{2,1}\subseteq V(\Hm)$ and hence is covered by $\Hm$ with lift $p'_{2,1}$. If $p_{2,1}$ is non-empty, then as above $\led_{G'}(a)=\Start(p'_{2,1})$.

We prove that $p_{2,2}\subseteq G[\bBi]$, so that it is covered by $\bHi$ with lift $p'_{2,2}$. Note
\begin{align*}
A-(\bBo\cap\iBi)-V(\Hm)
&\subseteq A-(\bBo\cap\iBi)-(\iBo\cap\iBi) 
= A-\iBi
= A-\Bi_{G[A]}(u,\Ii\cdot d),\\
\bBi &= A-\Bi_{G[A]}(u,(\Ii-1)\cdot d).
\end{align*}
Since $b=\Start(p_{2,2})\in A-\Bi_{G[A]}(u,\Ii\cdot d)$ and $p_{2,2}$ has length at most $d$, $p_{2,2}$ is contained in $\bBi$ by definition (otherwise it would enter $\Bi_{G[A]}(u,(\Ii-1)\cdot d)$, contradicting $b\in A-\Bi_{G[A]}(u,\Ii\cdot d)$). 

Moreover, $b$ is not present in $\tHo$ or $\Hm$, so $\led_{G'}(b)=\led_{\bHi}(b)=\Start(p'_{2,2})$ by induction. 

Now $p_1$ is covered by $\tHo$ with lift $p'_1$, $p_{2,1}$ is covered by $\Hm$ with lift $p'_{2,1}$, and $p_{2,2}$ is covered by $\bHi$ with lift $p'_{2,2}$. Assuming all are non-empty (the empty cases are analogous), we have $\pi(\End(p'_1))=\End(p_1)$, $\Start(p'_{2,1})=\led_{G'}(a)$ and $\pi(\End(p'_{2,1}))=\End(p_{2,1})$, and $\Start(p'_{2,2})=\led_{G'}(b)$. By \Cref{def:layered}, edges $(\End(p'_1),\Start(p'_{2,1}))$ and $(\End({p'_{2,1}}),\Start(p'_{2,2}))$ are in $G'$, so $p'_1\oplus p'_{2,1}\oplus p'_{2,2}$ is a path in $G'$ and is a lift of $p=p_1\oplus p_{2,1}\oplus p_{2,2}$. 

\subsection{Size and Time Complexity}

\newcommand{\Mo}{m^{\mathrm{out}}}
\newcommand{\bMo}{\bar{m}^{\mathrm{out}}}
\newcommand{\Mi}{m^{\mathrm{in}}}
\newcommand{\bMi}{\bar{m}^{\mathrm{in}}}

The algorithm is recursive. We first prove the following lemma.
\begin{lemma}\label{lem:projection size}
    Suppose $G'\leftarrow \PC(A)$. The size of $G'$ is at most $\sum_{v\in V(G')}\deg_{G[A]}(\pi(v))$.
\end{lemma}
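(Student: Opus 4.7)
The plan is to induct on $|A|$, with the base case $|A|=1$ being trivial since $G[A]$ (and hence $G'$) has no edges. For the inductive step, I treat Cases~1 and~2 uniformly: in both, the returned $G'$ is a layered projection $\mathsf{Layer}((G_1,\dots,G_z)\rightarrow G[A])$ for some $z\in\{2,3\}$. Let $S_i\subseteq A$ denote the set of vertices present in $G_i$, namely $(S_1,S_2)=(\bBo,\iBo)$ in Case~1 and $(S_1,S_2,S_3)=(\bBo\cap\iBi,\,V(\tTo)\cup V(\tTi),\,\bBi)$ in Case~2. Each recursive call is made on a strict subset of $A$, so the inductive hypothesis gives $|E(G_i)|\le\sum_{v'\in V(G_i)}\deg_{G[S_i]}(\pi(v'))$ whenever $G_i$ is built recursively; for the single non-recursive layer $\Hm=G[S_2]$ in Case~2 the identical bound is immediate since $\Hm$ is an induced subgraph and $|E(\Hm)|=\sum_{v\in S_2}\dego_{G[S_2]}(v)\le\sum_{v\in S_2}\deg_{G[S_2]}(v)$.

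Next I bound the concatenating edges introduced by \Cref{def:layered}. Fix $u'\in V(G_i)$ and set $u=\pi(u')$. Every concatenating edge out of $u'$ terminates at some representative $\led_{G'}(v)\in V(G_j)$ with $j>i$; by the representative rule in \Cref{def:layered}, such a $v$ must be absent from $S_1\cup\cdots\cup S_{j-1}$, and in particular $v\notin S_i$. Since at most one edge is added from $u'$ to each such representative, the number of concatenating edges out of $u'$ is at most
\[
|\{v:(u,v)\in E,\ v\notin S_i\}|=\dego_{G[A]}(u)-\dego_{G[S_i]}(u)\le\deg_{G[A]}(u)-\deg_{G[S_i]}(u),
\]
where the last step uses $\degi_{G[A]}(u)\ge\degi_{G[S_i]}(u)$.

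Summing the two contributions over all layers telescopes cleanly:
\[
|E(G')|\le\sum_i\sum_{v'\in V(G_i)}\deg_{G[S_i]}(\pi(v'))+\sum_i\sum_{u'\in V(G_i)}\bigl(\deg_{G[A]}(\pi(u'))-\deg_{G[S_i]}(\pi(u'))\bigr)=\sum_{v'\in V(G')}\deg_{G[A]}(\pi(v')),
\]
which is the claim. The delicate point to get right is the exclusion $v\notin S_i$: it is precisely this exclusion that guarantees each concatenating edge out of $u'$ is not already charged against the inductive (or subgraph) bound $\deg_{G[S_i]}(\pi(u'))$ on $|E(G_i)|$, so that no double-counting occurs when the two sums are combined.
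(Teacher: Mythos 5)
Your proof is correct and follows essentially the same strategy as the paper's: induct on $|A|$, bound the internal edges of each layer by the inductive hypothesis (or directly for the induced-subgraph layer $\Hm$), and charge each concatenating edge of the layered projection to its tail $u'$, observing that the target vertex must be absent from $S_i$ so that the concatenating out-degree is at most $\deg_{G[A]}(\pi(u'))-\deg_{G[S_i]}(\pi(u'))$. Your unified treatment of all $z\in\{2,3\}$ layers is a minor presentational streamlining of the paper's separate Case 1/Case 2 accounting, but the underlying counting argument is the same.
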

\begin{proof}
    By induction on $|A|$. When $|A|=1$, there are no edges in $G'$, so the lemma is trivial. Suppose $|A|>1$. The algorithm returns different $G'$ based on two cases.

    \paragraph{Case 1.} The size of $G'$ contains two parts: (1) the sizes of $\bHo,\Ho$, which, by induction, are at most 
    \[
    \sum_{v\in V(\bHo)}\deg_{G[\bBo]}(\pi(v))+\sum_{v\in V(\Ho)}\deg_{G[\iBo]}(\pi(v)).
    \]
    (2) The edges added by \Cref{def:layered}. For every $v\in V(\bHo)$, the number of edges added from $v$ equals $\deg_{G[A]}(\pi(v))-\deg_{G[\bBo]}(\pi(v))$, because only vertices in $A-\bBo$ can have representatives in $\Ho$; all other vertices are present in $\bHo$. Summing over $\sum_{v\in V(\bHo)}(\deg_{G[A]}(\pi(v))-\deg_{G[\bBo]}(\pi(v)))$ and $\sum_{v\in V(\bHo)}\deg_{G[\bBo]}(\pi(v))+\sum_{v\in V(\Ho)}\deg_{G[\iBo]}(\pi(v))$ gives the bound $\sum_{v\in V(G')}\deg_{G[A]}(\pi(v))$.

\paragraph{Case 2.}

\smallskip\noindent\textit{(i) The \(\tHo\)-term.}
All edges with both endpoints in \(\tHo\) lie inside the induced subgraph
\(G[\bBo\cap\iBi]\). Thus
\[
|E(\tHo)| \;\le\; \sum_{v\in V(\tHo)} \deg_{G[\bBo\cap \iBi]}(\pi(v)).
\]
Any additional edge incident to \(v\in V(\tHo)\) that goes to
\(\Hm\cup\bHi\) must use a representative outside \(\bBo\cap\iBi\), so the
number of such outgoing edges is at most
\[
\sum_{v\in V(\tHo)}
\Bigl(\deg_{G[A]}(\pi(v))-\deg_{G[\bBo\cap \iBi]}(\pi(v))\Bigr).
\]
Consequently, the total number of edges charged to \(\tHo\) is at most
\[
\sum_{v\in V(\tHo)} \deg_{G[A]}(\pi(v)).
\]

\smallskip\noindent\textit{(ii) The \(\Hm\)-term.}
Edges internal to \(\Hm\) are contained in the induced subgraph on
\(V(\Hm)\), so
\[
|E(\Hm)| \;\le\; \sum_{v\in V(\Hm)} \deg_{G[V(\Hm)]}(\pi(v)).
\]
Edges from \(\Hm\) to \(\bHi\) are added only from the \(\Hm\)-side, and for
each \(v\in V(\Hm)\) there are at most
\[
\deg_{G[A]}(\pi(v))-\deg_{G[V(\Hm)]}(\pi(v))
\]
such edges. Hence the total number of edges charged to \(\Hm\) is at most
\[
\sum_{v\in V(\Hm)} \deg_{G[A]}(\pi(v)).
\]

\smallskip\noindent\textit{(iii) The \(\bHi\)-term.}
By the induction hypothesis applied within \(\bHi\), the edges with both
endpoints in \(\bHi\) are at most
\[
|E(\bHi)| \;\le\; \sum_{v\in V(\bHi)} \deg_{G[A]}(\pi(v)).
\]
Inter-piece edges incident to \(\bHi\) are \emph{not} charged here (they were
already charged to their tails in \(\tHo\) or \(\Hm\)), so there is no
double counting.

\smallskip
Summing the contributions from (i)–(iii) yields
\[
|E(G')|
\;\le\;
\sum_{v\in V(\tHo)} \deg_{G[A]}(\pi(v))
+
\sum_{v\in V(\Hm)} \deg_{G[A]}(\pi(v))
+
\sum_{v\in V(\bHi)} \deg_{G[A]}(\pi(v))
\]
\[=
\sum_{v\in V(G')} \deg_{G[A]}(\pi(v)),
\]
which completes the proof.

\end{proof}

Thus, it remains to bound $\sum_{v\in V(G')}\deg_G(\pi(v))$, which we prove next.

\begin{lemma}\label{lem:degsize}
    Suppose $G'\leftarrow \PC(A)$. Then
    \[
    \sum_{v\in V(G')}\deg_G(\pi(v))\le \left(1+9(\log (\deg_G(A)))\cdot \frac{\eps'}{\eps}\right)\deg_G(A).
    \]
    The running time of $\PC(A)$ is at most $\log (\deg_G(A))\cdot O(\deg_G(A)/\eps)$.
\end{lemma}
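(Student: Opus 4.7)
The plan is to prove both the size and running-time bounds simultaneously by strong induction on $|A|$. Set $m_A := \deg_G(A)$, $f(A) := \sum_{v \in V(G')} \deg_G(\pi(v))$, and $\alpha(m) := 1 + 9(\log m)\cdot \eps'/\eps$, so the goal is to show $f(A) \le \alpha(m_A) m_A$ together with running time $\log m_A \cdot O(m_A/\eps)$. The base case $|A|=1$ is immediate: $G'$ is a single vertex with $f(A) = \deg_G(u) = m_A$ and the algorithm does $O(1)$ work.

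For Case~1 of the inductive step, the layered construction introduces no new vertices, so $f(A) = f(\iBo) + f(\bBo)$. Letting $\Mo = \deg_G(\iBo)$ and $\bMo = \deg_G(\bBo)$, the forward thin-layer condition gives $\Mo + \bMo \le m_A + \eps'\Mo$, while the Case~1 hypothesis gives $\Mo \le (1-\eps) m_A$ and therefore $\log \Mo \le \log m_A - \eps \log_2 e$. Invoking the induction hypothesis on $\iBo$ and $\bBo$, a routine manipulation reduces the target inequality $f(A)\le\alpha(m_A)m_A$ to $1 + 9(\eps'/\eps)\log m_A \le 9 \log_2 e$, which holds under $\lambda \ge 10000 \log^4 n$ since this forces $\eps'\log m_A/\eps \le 18\log^2 n/\sqrt{\lambda} < 1$.

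The technical crux, and the main obstacle, is Case~2. The critical observation is that the interleaved forward/backward Dijkstra performs exactly $\deg_G(v)$ memory accesses each time it fixes a vertex $v$, so ``forward stopped first'' forces $\deg_G(\iBi) \ge \deg_G(\iBo) \ge (1-\eps) m_A$, which in turn yields $\deg_G(A \setminus \iBi) \le \eps m_A$. Combining the disjoint decomposition $\bBi = (A\setminus \iBi) \sqcup (\bBi \cap \iBi)$ with the backward thin-layer bound $\deg_G(\bBi\cap \iBi) \le \eps' m_A$ shows $\deg_G(\bBi) \le (\eps+\eps')m_A$; symmetrically the forward thin layer gives $\deg_G(\bBo\cap\iBi) \le \deg_G(\bBo) \le (\eps+\eps')m_A$. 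Since $\Hm$ is an induced subgraph of $G$, $f(\Hm) = \deg_G(V(\Hm))$, and because $V(\Hm) \cup \bBi \cup (\bBo\cap\iBi) = A$ with every pairwise intersection bounded by the smaller set of degree $O((\eps+\eps')m_A)$, inclusion--exclusion yields $f(\Hm) + \deg_G(\bBi) + \deg_G(\bBo\cap\iBi) \le m_A + O((\eps+\eps')m_A)$. Applying the induction hypothesis on $\bHi$ and $\tHo$, one obtains $f(A) \le m_A + 9(\log m_A)(\eps'/\eps)(\eps+\eps')m_A + O((\eps+\eps')m_A)$, which is at most $\alpha(m_A) m_A$ because $(\eps+\eps') = O(1/\sqrt{\lambda})$ is negligible compared to $9(\log m_A)\eps'/\eps$ under the parameter regime.

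For the running time, the local work per call is $O(m_A)$: the interleaved Dijkstra makes at most $O(m_A)$ memory accesses, and in Case~2 finishing the backward growth and assembling $\Hm$ from $\tTo,\tTi$ is also $O(m_A)$. The sum of sub-problem degrees at any single recursion level grows by at most a factor of $1+\eps'$ per level (from the thin-layer overlaps), so the per-level sum remains $\Theta(m_A)$ throughout. The recursion depth is $O(\log m_A/\eps)$: in Case~2 both children have degree $\le (\eps+\eps')m_A \ll m_A$, while in Case~1 the smaller branch $\iBo$ shrinks by a factor of $1-\eps$ per level. Multiplying gives total time $O(m_A \log m_A/\eps) = \log m_A \cdot O(m_A/\eps)$, as claimed. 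Without the interleaving-based symmetry $\deg_G(\iBi) \ge \deg_G(\iBo)$, the set $\bBi$ could span most of $A$ and neither bound would close; exploiting this symmetry is the technical heart of the proof.
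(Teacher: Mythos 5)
Your proof of the degree-sum bound follows the paper's argument essentially verbatim: induction on $|A|$, the same two-case split, the thin-layer inequality $\deg_G(\iBo)\le(1+\eps')\deg_G(A-\bBo)$ in Case~1, and in Case~2 the crucial interleaving observation that forces $\deg_G(\iBi)\ge\deg_G(\iBo)\ge(1-\eps)\deg_G(A)$ and hence $\deg_G(\bBo),\deg_G(\bBi)\le 2\eps\deg_G(A)$ (this is the paper's \Cref{lem:largemid}). The manipulations in Case~1 differ only in cosmetic bookkeeping (your reduction to $\alpha(m_A)\le 9\log_2 e$ is a slightly weaker form of the paper's $\alpha(\Mo)\le 2$ step), and in Case~2 your inclusion--exclusion bound for the $\Hm$ piece is interchangeable with the paper's crude $\sum_{v\in V(\Hm)}\deg_G(\pi(v))\le\deg_G(A)$.

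The running-time argument, however, has a genuine gap. You assert that ``the recursion depth is $O(\log m_A/\eps)$'' and support this by noting that $\iBo$ shrinks by a $(1-\eps)$ factor in Case~1. But the recursion also descends into $\bBo$, and $\bBo$ need not shrink geometrically: if $\Io=1$, then $A-\bBo=\Bo_{G[A]}(u,0)=\{u\}$, so $\deg_G(\bBo)=\deg_G(A)-\deg_G(\{u\})$, which could be almost all of $\deg_G(A)$. Consequently a chain of Case-1 $\bBo$ calls can have depth $\Omega(|A|)$, and ``depth $\times$ per-level work'' would then give a bound far weaker than claimed. What saves the time bound is that the local Dijkstra work at a call is $O(\log n\cdot\deg_G(\iBo))$, not $O(\log n\cdot\deg_G(A))$, and the sets $A-\bBo$ carved out along a $\bBo$-chain are disjoint, so the Dijkstra cost along the chain telescopes to $O(\log n\cdot\deg_G(A))$; similarly the cut-edge construction cost amortizes to $O(|E(G')|)=O(\deg_G(A))$ over the whole recursion. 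The clean way to make this rigorous is the same inductive recurrence used for the degree-sum bound, charging the $O(\log n\cdot\deg_G(\iBo))$ Dijkstra cost against the $\log(1-\eps)$ slack gained on the $\iBo$ child, rather than a global depth argument.
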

\begin{proof}
    By induction on $|A|$. For $|A|=1$, $\PC$ returns $G[A]$, and the claim is trivial.

    The algorithm first runs two procedures (forward and backward growing) in lockstep, one edge access at a time, and stops as soon as one stops. The time between two edge accesses is at most $O(\log n)$ using a standard heap. Assume forward growing stops earlier.

    The forward procedure runs Dijkstra from $u$ until it finds $\Io$, costing $O(\log n\cdot \deg_G(\iBo))$.

    We analyze the two cases.

    \paragraph{Case 1: $\deg_G(\iBo)< (1-\eps)\deg_G(A)$.}
    Here $V(G')=V(\Ho)\cup V(\bHo)$ with $\Ho=\PC(\iBo)$ and $\bHo=\PC(\bBo)$. By induction,
    \begin{align*}
    \sum_{v\in V(G')}\deg_G(\pi(v))
    &\le \left(1+\frac{9\eps'\log\bigl|\deg_G(\iBo)\bigr|}{\eps}\right)\deg_G(\iBo)
    +\left(1+\frac{9\eps'\log\bigl|\deg_G(\bBo)\bigr|}{\eps}\right)\deg_G(\bBo)\\
    &\le \left(1+\frac{9\eps'(\log (1-\eps)+\log\bigl|\deg_G(A)\bigr|)}{\eps}\right)
    (1+\eps')\deg_G(A-\bBo)\\
    &\quad+\left(1+\frac{9\eps'\log\bigl|\deg_G(\bBo)\bigr|}{\eps}\right)\deg_G(\bBo)\\
    &\le \left(1+\frac{9\eps'\log\bigl|\deg_G(A)\bigr|}{\eps}\right)\deg_G(A).
    \end{align*}
    The second inequality uses $\deg_G(\iBo)< (1-\eps)\deg_G(A)$, $\lambda\ge 1000\log^2 n$, and the definitions of $\iBo,\bBo$. The third inequality uses 
        \[
        \left(1+\frac{9\eps'(\log(1-\eps)+\log\bigl|\deg_G(A)\bigr|)}{\eps}\right)\le 2,
        \qquad
        \frac{\log(1-\eps)}{\eps}\le -0.4,
        \]
    after plugging in $\eps',\eps$.

    The running time consists of two recursive calls and building the layered projection. The latter is bounded by $\sum_{v\in V(G')}\deg_G(\pi(v))$, since every edge we add corresponds to some $v'\in V(G')$ and an edge $(\pi(v'),v)$ in $G$. The total complexity is
    \[
    O(\log n\cdot \deg_G(\iBo))+\tfrac{\log (\deg_G(A))+\log(1-\eps)}{\eps}\cdot O(\log n\cdot (\deg_G(\iBo)+\deg_G(\bBo)))+O(\deg_G(A))
    \]
    \[
    \le \log (\deg_G(A))\cdot O(\log n\cdot \deg_G(A)/\eps),
    \]
    where the first term is for finding $\Io$ (or $\Ii$), the second for recursion by induction, and the last for constructing $G'$, which is $\le 2\deg_G(A)$ since $\lambda>10000\log^4 n$.
    
    \paragraph{Case 2: $\deg_G(\iBo)\ge (1-\eps)\deg_G(A)$.}
    Now $V(G')=V(\tHo)\cup V(\Hm)\cup V(\bHi)$ with $\tHo=\PC(\bBo\cap \iBi)$, $\Hm=G[\iBo\cap \iBi]$, and $\bHi=\PC(\bBi)$. We need:

    \begin{lemma}\label{lem:largemid}
        $\deg_G(\bBo),\deg_G(\bBi)\le 2\eps\deg_G(A)$.
    \end{lemma}
    \begin{proof}
        Since forward growing stops earlier than backward growing, the backward procedure accesses at least as many edges before stopping. As each accessed edge contributes to $\deg_G(\iBi)$, we have $\deg_G(\iBo),\deg_G(\iBi)\ge (1-\eps)\deg_G(A)$. By the definitions of $\bBo,\bBi$,
        \[
        \deg_G(\bBo),\deg_G(\bBi)\le \eps\deg_G(A)+\eps'\deg_G(A)\le 2\eps\deg_G(A).
        \]
    \end{proof}

    By induction,
\begin{align*}
\sum_{v\in V(G')}\deg_G(\pi(v))
&\le\left(1+\frac{9\eps' \log\bigl|\deg_G(\bBo\cap\iBi)\bigr|}{\eps}\right)\deg_G(\bBo\cap\iBi)\\
&\quad+\left(1+\frac{9\eps'\log\bigl|\deg_G(\bBi)\bigr|}{\eps}\right)\deg_G(\bBi)\\
&\quad+\deg_G(A)\\
&\le\left(1+\frac{9\eps'\,(\log\bigl|\deg_G(A)\bigr|+\log(2\eps))}{\eps}\right)\cdot 2\eps\deg_G(A)\\
&\quad+\left(1+\frac{9\eps'\,(\log\bigl|\deg_G(A)\bigr|+\log(2\eps))}{\eps}\right)\cdot 2\eps\deg_G(A)+\deg_G(A)\\
&\le \deg_G(A)+9\eps\deg_G(A)
\le\left(1+\frac{9\eps' \log\bigl|\deg_G(A)\bigr|}{\eps}\right)\deg_G(A),
\end{align*}
using \Cref{lem:largemid} and the definitions of $\iBo,\bBo$, then plugging in $\eps,\eps'$.

The running time here includes two global SSSP computations costing $O(\deg_G(A)\log n)$, plus recursion time bounded by $(\log(1-\eps)+\log (\deg_G(A)))\cdot O(\log n\cdot \deg_G(A)/\eps)$. Thus, the total is $\log (\deg_G(A))\cdot O(\log n\cdot \deg_G(A)/\eps)$.
\end{proof}

\section{Deterministic Negative Weight SSSP}

In \Cref{subsec:restrictedalg}, we show how to solve the \emph{restricted SSSP} problem (defined below). Bringmann, Cassis, and Fischer \cite{BringmannCF23} showed a black-box reduction from solving SSSP on general graphs to restricted SSSP. Although the main statements in \cite{BringmannCF23} are randomized (see Sections 4 and 5 in \cite{BringmannCF23}), the reduction to a restricted graph is deterministic.

\begin{definition}[\cite{BringmannCF23}]\label{def:restrictedgraphs}
    A \emph{restricted graph} is a directed graph $G=(V,E,w)$ satisfying 
    \begin{itemize}
        \item the edge weights lie in the set $\{-1,0,1,\dots,n\}$,\footnote{In the original definition of \cite{BringmannCF23}, there is no upper bound $n$ on the edge weights. We can assume an upper bound $n$ because any edge of weight $>n$ cannot appear in the shortest-path tree from $s$: otherwise that path would have weight at least $1$ (the most negative edge has weight at least $-1$ and a simple path has at most $n-1$ edges), contradicting that $s$ has a $0$-weight edge to every vertex. This assumption also appears in \cite{FischerHLRS25}.}
        \item for every cycle $C$, we have $w(C)/|C|\ge 1$,
        \item there is a vertex $s\in V$ with an edge of weight $0$ from $s$ to every other vertex.
    \end{itemize}  
    The problem \emph{restricted SSSP} is: given a restricted graph, find a shortest-path tree from $s$.
\end{definition}

Sections 4 and 5 of \cite{BringmannCF23} prove the following reduction.

\begin{lemma}[Sections 4 and 5 of \cite{BringmannCF23}]\label{lem:reducetorestricted}
    If there is a deterministic algorithm for restricted SSSP with running time $T_{\mathrm{RSSSP}}(m,n)$, then there is a deterministic algorithm for general SSSP with running time $O\!\left(T_{\mathrm{RSSSP}}(m,n)\cdot \log (Wn)\right)$.
\end{lemma}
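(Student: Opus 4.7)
The plan is to adapt the classical weight-scaling paradigm of Goldberg, carried out deterministically in Sections~4--5 of \cite{BringmannCF23}: reduce general SSSP to $O(\log(nW))$ successive calls to restricted SSSP via bit-by-bit refinement of a potential.

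First I would define scaled weights $w^{(k)}(e) = \lceil w(e)/2^{B-k}\rceil$ for $k = 0, 1, \ldots, B$, where $B = \Theta(\log(nW))$, so that $w^{(0)}$ is a trivial instance (edge weights confined to $\{-1,0,1\}$) and $w^{(B)} = w$ recovers the original input. The invariant I would maintain by induction on $k$ is a potential $\phi_k : V \to \mathbb{Z}$ such that $w^{(k)}_{\phi_k}(e) \ge 0$ for every edge $e$; by \Cref{lem:potentialequivalence}, the final $\phi_B$ reduces the original problem to a non-negative-weight SSSP that Dijkstra solves in $O((m+n)\log n)$ time.

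The inductive step from $\phi_k$ to $\phi_{k+1}$ is the heart of the argument. The key observation is that $w^{(k+1)}(e) \in \{2w^{(k)}(e)-1,\, 2w^{(k)}(e)\}$, so the doubled potential $2\phi_k$ already yields adjusted weights $w^{(k+1)}_{2\phi_k}(e) \ge -1$. I would then construct a restricted graph $G'$ by (i)~attaching a super-source $s$ with $0$-weight edges to every vertex, furnishing the third bullet of \Cref{def:restrictedgraphs}; (ii)~capping the (already bounded-above) adjusted weights at $n$, which is safe because no shortest-path tree edge from $s$ ever has weight exceeding $n$ once the $0$-weight edges from $s$ are available; and (iii)~arguing that the absence of $w^{(k+1)}$-negative cycles translates, after an appropriate per-edge unit shift whose effect telescopes on cycles to $+|C|$, to the mean-cycle condition $w(C)/|C|\ge 1$. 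One call to the restricted-SSSP black box returns shortest-path distances, which I reinterpret (undoing the shift and the factor $2$) as a refinement $\phi^*$ and combine with $2\phi_k$ to obtain $\phi_{k+1}$.

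The main obstacle is the case where the original graph has a negative cycle: the mean-cycle condition will then fail at some level, and the restricted SSSP call may return inconsistent output. The standard remedy is to verify, in $O(m)$ time after each call, that the purported $\phi_{k+1}$ actually makes $w^{(k+1)}_{\phi_{k+1}}$ non-negative; a failure is an unambiguous certificate of a negative cycle, from which the cycle itself is extracted by tracing tight predecessor edges in the returned tree. Summing over $B+1 = O(\log(nW))$ rounds, each costing $T_{\mathrm{RSSSP}}(m,n) + O(m\log n)$, yields the claimed $O(T_{\mathrm{RSSSP}}(m,n)\cdot \log(nW))$ running time.
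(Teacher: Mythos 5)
The paper does not prove this lemma itself; it cites Sections 4--5 of \cite{BringmannCF23} as a black box. So there is no internal proof to compare against, and I can only assess your reconstruction on its own merits. You have correctly identified the overall mechanism --- bit-by-bit weight scaling with potential maintenance, giving $O(\log(nW))$ rounds --- and steps (i) and (ii) (attaching a super-source with $0$-weight edges, capping weights at $n$) are fine.

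The problem is step (iii). You establish $w^{(k+1)}_{2\phi_k}(e)\ge -1$, and you propose a per-edge $+1$ shift to get the mean-cycle condition. But if you actually apply that shift to the instance passed to the restricted-SSSP oracle, every edge weight becomes $\ge 0$; together with the $0$-weight edges from $s$, this forces $\dist(s,v)=0$ for \emph{every} $v$, so the oracle call is information-free, ``undoing the shift and the factor $2$'' cannot produce any genuine refinement, and the induction stalls at $w^{(k+1)}_{\phi_{k+1}}\ge -1$ rather than $\ge 0$. If instead the shift is only a thought experiment and you feed the unshifted $w^{(k+1)}_{2\phi_k}$ to the oracle, then the mean-cycle condition of \Cref{def:restrictedgraphs} fails: the ceiling scaling only gives $w^{(k+1)}(C)\ge w(C)/2^{B-k-1}\ge 0$, not $w^{(k+1)}(C)\ge |C|$, and potentials cancel on cycles, so $w^{(k+1)}_{2\phi_k}(C)/|C|$ need not be $\ge 1$. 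Either way, (iii) does not produce a valid \emph{and} non-degenerate restricted instance, which is exactly the nontrivial content of the reduction in \cite{BringmannCF23}. You would need to recover their precise normalization to close this gap; a flat additive unit shift cannot simultaneously keep $-1$-weight edges in play and lift every cycle mean to $1$.
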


In \Cref{subsec:restrictedalg}, we prove the next lemma.

\begin{lemma}\label{lem:RSSSP}
    There is a deterministic algorithm solving restricted SSSP in $O(m\log^8 n)$ time.
\end{lemma}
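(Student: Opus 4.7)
The plan is to adapt the recursive framework of Ashvinkumar et al.~\cite{ashvinkumar2023parallel}, which already reduces restricted SSSP to repeated invocations of LDD on the non-negative truncation of the working graph, and to replace each LDD call with $\PC$ from \Cref{thm:pathcover}. The introduction explicitly asserts this substitution works; the content of the proof is to verify the extension from subgraphs to projections and to carry out the running-time bookkeeping.

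Concretely, I would maintain a potential $\phi$, initialized to zero, and run $O(\log n)$ outer ``scaling'' rounds. Each round invokes a \textsc{ScaleDown} procedure that takes $G_\phi$ (whose negative edges have bounded aggregate contribution) and returns a new potential that halves the negativity; after all rounds a standard Dijkstra on the fully-reweighted graph finishes the job, translated back by \Cref{lem:potentialequivalence}. \textsc{ScaleDown} itself recurses $O(\log n)$ times, and at each leaf it must produce SSSP distances on a subgraph where short paths are captured by a small-diameter structure. In place of sampling an LDD of $(G_\phi)_{\ge 0}$, I would invoke $\PC$ on $(G_\phi)_{\ge 0}$ with diameter parameter $d=\Theta(\polylog n)$ and slack $\lambda=\Theta(\log^6 n)$, obtaining a projection $G'$ with SCC diameter $\tilde{O}(d)$ and size at most $(1+1/\log n)\cdot|E(G_\phi)|$.

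Given $G'$, shortest-path distances from $\led_{G'}(s)$ can be computed in $\tilde{O}(|E(G')|)$ time by running local Dijkstra inside each SCC and processing the DAG of SCCs in topological order, with Bellman--Ford sweeps for the $O(\log n)$ negative edges traversed along any relevant path. Taking the componentwise minimum over all copies of each vertex projects the $G'$-distances back to $V(G)$ and yields the next potential. Correctness rests on two properties of $\PC$: the path-covering property guarantees that shortest $d$-paths in $G_\phi$ have lifts in $G'$, and the representative structure of \Cref{def:layered} pins the source to a canonical copy so distances translate unambiguously. Summing the costs, $O(\log n)$ scaling rounds, $O(\log n)$ recursive depth in \textsc{ScaleDown}, $O(m\log^4 n)$ per $\PC$ call, and $O(m\log n)$ per local SSSP, yields $O(m\log^8 n)$. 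The crucial feature is that the $(1+1/\log n)$ size blow-up of $\PC$ compounds to $O(1)$ across the $O(\log n)$ recursion levels, avoiding the $m^{1+o(1)}$ cost that the original LDD-based version incurred.

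The main obstacle will be faithfully carrying out the \cite{ashvinkumar2023parallel} reduction on projections rather than subgraphs. The original framework treats each cluster as a subgraph of $G$, so shortest paths within a cluster are automatically shortest paths in $G$; with our projection, each SCC is a subgraph of $G'$ whose vertices are copies of $V(G)$, and one must use the representative mechanism to ensure that the ``source copy'' used for local Dijkstra inside an SCC coincides with the copy from which any path of interest is extracted. Establishing this consistency across the recursion, and checking that the compounded potential remains a valid reduced-cost potential for the \emph{original} graph $G$ (not $G'$), is where the technical work lies.
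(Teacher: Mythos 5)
Your high-level plan — reduce to restricted SSSP, then recursively invoke $\PC$ in place of LDD and exploit that the $(1+1/\log n)$ size blow-up compounds to $O(1)$ over $O(\log n)$ levels — matches the paper's strategy. But two concrete steps in your sketch do not hold up, and between them lies the actual content of the paper's proof.

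First, the path-cover diameter parameter cannot be $\Theta(\polylog n)$ throughout. The paper's recursion is on $k$, the negative-edge budget on shortest paths, and sets $\dcov = k/(20000\log^6 n)$, which starts at $\Theta(n/\log^6 n)$ and halves each level. The reason is that a shortest $s$-to-$v$ path $p$ in a restricted graph satisfies $w_{\ge 0}(p)\le k$ (not $\le \polylog n$), so a $\polylog n$-path cover of $H_{\ge 0}$ would simply not see the relevant paths.

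Second — and this is the real gap — your claim that computing $\dist_{G'}(\led_{G'}(s),\cdot)$ and ``taking the componentwise minimum over all copies'' yields a valid potential for $H$ is not justified and is not how the paper proceeds. The projection $G'$ only covers $\dcov$-paths; an $s$-to-$v$ shortest path in $H$ generally has $w_{\ge 0}$-length up to $k$ and must be split into many $\dcov$-pieces. With subgraphs (the LDD setting), distances in $G\setminus E_{\mathrm{rem}}$ \emph{are} distances in $G$ along paths avoiding $E_{\mathrm{rem}}$, so Bellman--Ford sweeps over the $\polylog n$ removed edges suffice. With a projection there is no analogue of $E_{\mathrm{rem}}$: the SCCs of $G'$ are neither subgraphs nor SCCs of $G$, and there is no direct mechanism for a single run of Dijkstra on $G'$ to produce correct distances in $H$. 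The paper resolves this with an explicit \emph{layered graph} $G''$: it takes $x=2\lambda=\Theta(\log^6 n)$ disjoint copies $G'_1,\dots,G'_x$ of $G'$ and adds, for every original edge $(u,v)\in E(H)$ and every copy $u'_i$ of $u$ in layer $i$, an edge to the \emph{representative} of $v$ in layer $i+1$. A shortest path with $w_{\ge 0}\le k$ is then chopped into at most $2\lambda$ pieces of $w_{\ge 0}$-length $\le \dcov$; each piece lifts to one layer of $G''$ by the path-cover property, and the inter-layer edges stitch them together. One also needs to first compute a potential via the recursive call $\SSSPFN(\GC, s', k/2)$ on the SCC-restriction of $G'$ with a super-source, to make $G''_\phi$ a graph on which the DAG-plus-Dijkstra routine (\Cref{lem:DAGsssp}) is applicable. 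Your proposal has no substitute for this $G''$ construction; without it (or an equivalent stitching mechanism), the distances you compute in $G'$ have no guaranteed relationship to $\dist_H(s,\cdot)$. This is also where the dominant $\log^7 n$-per-level term in the running time comes from (Dijkstra on $|E(G'')|=O(m\log^6 n)$ edges), which your accounting does not capture.
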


Combining \Cref{lem:reducetorestricted,lem:RSSSP} yields \Cref{thm:main}.

\subsection{An Algorithm for Restricted SSSP}\label{subsec:restrictedalg}

We now describe the algorithm for restricted SSSP. The input is a directed graph $G=(V,E,w)$ and a source $s$ satisfying \Cref{def:restrictedgraphs}. The algorithm outputs a shortest-path tree of $G$ from $s$.

Before stating the algorithm, we recall two standard subroutines: (1) computing distances when all negative edges must cross different SCCs, and (2) computing distances when the number of negative edges on every shortest path is small (by alternating Dijkstra and Bellman–Ford). We defer proofs to \Cref{sec:basicSSSP}.

\begin{lemma}\label{lem:DAGsssp}
    Given a directed graph $G=(V,E,w)$ where $w(u,v)\ge0$ for every edge $(u,v)$ with $u,v$ in the same SCC of $G$, there is a deterministic $O(m\log n)$-time algorithm for SSSP on $G$. 
\end{lemma}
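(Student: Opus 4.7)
The plan is to exploit the standard condensation of $G$ into its DAG of SCCs. First I would compute the strongly connected components $C_1,\dots,C_k$ of $G$ in $O(m+n)$ time using Tarjan's algorithm, and topologically sort them so that every edge crossing SCCs goes from some $C_i$ to some $C_j$ with $i<j$. Initialize tentative distances $d(s)=0$ and $d(v)=+\infty$ for every other vertex.

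Process the SCCs in topological order $C_1,C_2,\dots,C_k$. When processing $C_i$, run a multi-source Dijkstra restricted to the induced subgraph $G[C_i]$, using the current tentative label $d(v)$ as the initial key of each $v\in C_i$. Because every edge inside $C_i$ has non-negative weight by hypothesis, the Dijkstra is correct and computes, for each $v\in C_i$,
\[
d(v)\;\leftarrow\;\min_{u\in C_i}\bigl(d(u)+\dist_{G[C_i]}(u,v)\bigr).
\]
Once $C_i$ is finalized, relax each outgoing edge $(u,v)$ with $u\in C_i$ and $v\notin C_i$ by $d(v)\leftarrow\min(d(v),d(u)+w(u,v))$, and then move on to $C_{i+1}$.

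For correctness I would induct on the topological index $i$. Because the condensation is a DAG, any shortest $s$-$v$ path with $v\in C_i$ enters $C_i$ exactly once, at some first vertex $u\in C_i$; the prefix up to $u$ lives entirely in SCCs $C_{i'}$ with $i'<i$ (plus the single entering edge), and by the inductive hypothesis the relaxation step at the end of processing those SCCs has already set $d(u)$ equal to that prefix's length. The non-negative multi-source Dijkstra inside $C_i$ then correctly adds the internal suffix. The running time is $O(m+n)$ for the SCC/topological computation, $O((|E(G[C_i])|+|C_i|)\log n)$ per SCC using a standard binary heap, plus $O(m)$ total cross-SCC relaxations; since the $E(G[C_i])$ are edge-disjoint and sum to at most $m$, the total is $O(m\log n)$.

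The main (mild) subtlety I expect is handling the initialization of multi-source Dijkstra with arbitrary real-valued keys, including labels that may be very negative because of prior cross-SCC negative edges. This is not a real obstacle: since all edges inside $C_i$ are non-negative, the Dijkstra invariant ``once a vertex is extracted, its key is final'' still holds, so correctness follows from the standard analysis applied to the shifted key $d(\cdot)-\min_{u\in C_i}d(u)$. The shortest-path tree itself is recovered from parent pointers maintained during each Dijkstra and each cross-SCC relaxation in the usual way.
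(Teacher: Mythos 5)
Your proof is correct, but it takes a genuinely different route from the paper's. The paper proves \Cref{lem:DAGsssp} by a \emph{global} reweighting: it topologically orders the SCCs $V_1,\dots,V_z$ and sets a potential $\phi(v)=-W\cdot i$ for $v\in V_i$, which makes every cross-SCC edge $(u,v)$ with $u\in V_i$, $v\in V_j$, $i<j$ have adjusted weight $w(u,v)+W(j-i)\ge 0$; then a \emph{single} Dijkstra on $G_\phi$ suffices, with correctness via \Cref{lem:potentialequivalence}. You instead process SCCs one at a time in topological order, running a multi-source Dijkstra with tentative initial keys inside each $C_i$ and then relaxing out-edges, and you argue correctness by induction on the topological index. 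Both are $O(m\log n)$ and both turn on the same structural facts (no negative edge inside an SCC, condensation is a DAG, so shortest paths enter each SCC once). The paper's potential trick is a one-liner that reduces the problem entirely to nonnegative-weight Dijkstra and plugs cleanly into the potential-adjustment machinery already used elsewhere in the paper; your SCC-by-SCC scheme avoids introducing the (possibly huge) potential values $-Wi$ and is perhaps the more ``hands-on'' classical DAG dynamic-programming viewpoint, at the cost of having to justify multi-source Dijkstra with arbitrary (possibly negative) initial labels, which you correctly observe is harmless by a uniform shift. One minor imprecision in your write-up: you say the induction has ``already set $d(u)$ equal to that prefix's length''; more precisely it sets $d(u)$ to the minimum over all prefixes entering $C_i$ at $u$, which is $\le$ that particular prefix's length — this is what the argument actually needs and what you implicitly use.
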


\begin{lemma}[SSSP with Few Negative Edges]\label{lem:SSSPfewneg}
    There is a deterministic algorithm which, given a directed graph $G=(V,E,w)$ with no negative cycle, a source $s$, and an integer $k$ such that every $s$-to-$v$ shortest path uses at most $k$ negative edges, computes $\dist_G(s,\cdot)$ in $O(km\log n)$ time.
\end{lemma}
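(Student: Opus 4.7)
}
The plan is the classical alternation between Bellman--Ford-style relaxations of negative edges and Dijkstra on the non-negative subgraph. For each $i \in \{0,1,\dots,k\}$, define
\[
D_i(v) \;=\; \min\{\,w(P)\mid P\text{ is an }s\text{-to-}v\text{ walk in }G\text{ using at most }i\text{ negative edges}\,\}.
\]
Since $G$ has no negative cycle, every shortest walk can be taken to be a simple path, and by assumption the shortest $s$-to-$v$ simple path uses at most $k$ negative edges, so $D_k(v)=\dist_G(s,v)$. The algorithm computes $D_0,D_1,\dots,D_k$ in turn and outputs $D_k$.

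For the base case, let $E^+ = \{e\in E\mid w(e)\ge 0\}$ and run Dijkstra from $s$ on $(V,E^+,w)$; this computes $D_0$ in $O(m\log n)$ time. For the inductive step, assume $D_{i-1}$ has been computed. Initialize tentative distances $d(v)\leftarrow D_{i-1}(v)$ for all $v\in V$. Then scan each negative edge $(u,v)$ once and set $d(v)\leftarrow \min\{d(v),\,D_{i-1}(u)+w(u,v)\}$; this records exactly the walks that extend an at-most-$(i{-}1)$-negative walk by one more negative edge. Finally, run a multi-source Dijkstra on $(V,E^+,w)$ with the $d(\cdot)$ values as initial priority-queue keys. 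Because the remaining edges are non-negative, this correctly propagates each tentative distance to all reachable successors along non-negative suffixes, yielding $D_i$.

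The main subtlety to justify is correctness of the inductive step. It follows by induction on $i$: any optimal walk witnessing $D_i(v)$ ends in some (possibly empty) non-negative suffix $Q$ preceded by either (i) a walk using at most $i$ negative edges with zero negative edges in $Q$ --- already captured by initializing with $D_{i-1}$ and running Dijkstra --- or (ii) a walk of the form $P'\cdot(u,v^*)\cdot Q$ where $P'$ uses at most $i-1$ negative edges and $(u,v^*)$ is negative; in case (ii), the one-pass relaxation gives $d(v^*)\le D_{i-1}(u)+w(u,v^*)$, and the subsequent non-negative Dijkstra extends this along $Q$ to reach $v$ optimally. This is where the assumption that $G$ has no negative cycle is used: $D_{i-1}$ values are finite and the standard Dijkstra invariant (with non-negative edge weights) holds for the final relaxation phase.

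Each of the $k+1$ rounds costs $O(m\log n)$ --- $O(m)$ for the pass over negative edges and $O(m\log n)$ for Dijkstra on $E^+$ with a binary heap --- giving total time $O(km\log n)$, as required. The only real obstacle is the correctness argument for the multi-source Dijkstra step, which is standard once one observes that $d(v)$ is monotonically non-increasing and all edges used in this phase have non-negative weight.
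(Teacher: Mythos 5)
Your proof is correct and uses the same classical Bellman--Ford/Dijkstra hybrid as the paper: the paper materializes the idea as one Dijkstra run on an explicit $(k{+}1)$-layer graph whose interlayer edges are the negative edges, whereas you perform $k{+}1$ alternating rounds (one negative-edge relaxation pass followed by a multi-source Dijkstra on $E^+$) that compute the same per-layer values $D_i$ without building the layered graph. The two presentations are interchangeable and give the same $O(km\log n)$ bound.
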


\newcommand{\SSSPFN}{\textsc{kSSSP}}

Our algorithm is recursive. We describe an algorithm $\SSSPFN(H,s,k)$ that solves a restricted SSSP instance $(H,s)$ under the promise that every shortest $s$-to-$v$ path in $H$ uses at most $k$ negative edges. It suffices to call $\SSSPFN(H,s,n)$ to prove \Cref{lem:RSSSP}. 

\paragraph{Base case.} When $k=O(\log^6 n)$, apply \Cref{lem:SSSPfewneg} to $(H,s,k)$ to obtain all distances from $s$ in $H$. For the remainder, assume $k=\omega(\log^6 n)$.

\newcommand{\dcov}{d_{\mathrm{cov}}}
\newcommand{\ddiam}{d_{\mathrm{diam}}}

\paragraph{Building a path cover.} Compute a $\dcov$-path cover $G'_{\ge 0}$ of $H_{\ge 0}$ with at most $(1+1/\log n)\cdot |E(H)|$ edges and diameter slack $\lambda$ such that 
\[
\ddiam\le \lambda\cdot \dcov \;=\; k/2.
\]
By \Cref{thm:pathcover}, it suffices to set
\[
\lambda=10000\log^6 n
\qquad\text{and}\qquad 
\dcov=\frac{k}{20000\log^6 n},
\]
and invoke \Cref{thm:pathcover} to obtain such a $G'_{\ge 0}$.

\paragraph{Recursive call.}
From $G'_{\ge 0}$, form $G'$ by restoring each $0$-weight edge back to its original weight in $H$: for every $(u',v')\in E(G'_{\ge 0})$ with $w(\pi(u'),\pi(v'))<0$, set $w_{G'}(u',v'):=w(\pi(u'),\pi(v'))$.

Define $\GC$ from $G'$ by:
\begin{enumerate}
    \item computing the SCC decomposition of $G'$ and deleting all edges whose endpoints lie in different SCCs, and
    \item adding a super-source $s'$ and an edge of weight $0$ from $s'$ to every vertex of $G'$.
\end{enumerate}
Call $\SSSPFN(\GC,s',k/2)$ to obtain $d_{s'}(\cdot)=\dist_{\GC}(s',\cdot)$.

\paragraph{Computing distances for $H$.}
Construct $G''$ as follows.
\begin{enumerate}
  \item Make $x=2\lambda$ copies of $G'$, denoted $G'_1,\dots,G'_x$. For $u'\in V(G')$, let $u'_i\in V(G'_i)$ be its copy.
  \item For every $1 \le i < x$, each $(u,v)\in E(H)$, and each copy $u'_i \in \pi^{-1}(u)$ in $G'_i$ and the representative $v'_{i+1} \in \pi^{-1}(v)$ of $v$ in $G'_{i+1}$, add the edge $(u'_i,v'_{i+1})$ with weight $w(u,v)$.
  \item Add a source $s''$ and, for every vertex $u'_i$ with $\pi(u')=s$, add the $0$-weight edge $(s'',u'_i)$.
\end{enumerate}

Define a potential $\phi$ on $G''$ by
\[
\phi(u'_i)=d_{s'}(u')\quad\text{for }u'_i\neq s'',\qquad \phi(s'')=0.
\]
We will show that $G''_{\phi}$ satisfies the premise of \Cref{lem:DAGsssp}. Invoke \Cref{lem:DAGsssp} on $G''_{\phi}$ to compute $d_{s''}:V(G'')\to\bbZ$. For every $u\in V(H)$, output
\[
d_s(u)=\min\{\,d_{s''}(u'_i)+\phi(u'_i)\;\mid\; u'_i\in V(G''),\ \pi(u')=u\,\}.
\]

\subsection{Correctness}\label{subsec:RSSSPcorrectness}

\begin{lemma}
    $\SSSPFN(H,s,k)$ correctly outputs $\dist_H(s,\cdot)$ provided that
    \begin{itemize}
        \item $(H,s)$ is a restricted SSSP instance, and
        \item every shortest path from $s$ in $H$ uses at most $k$ negative edges.
    \end{itemize}
\end{lemma}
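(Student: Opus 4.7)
The plan is to proceed by strong induction on $k$. For the base case $k = O(\log^6 n)$, the algorithm invokes \Cref{lem:SSSPfewneg} directly on $(H,s,k)$, which yields correctness immediately. For the inductive step, assuming correctness for all smaller values, I would verify three items: the validity of the recursive call on $\GC$, the applicability of \Cref{lem:DAGsssp} to $G''_\phi$, and the equality of the output with $\dist_H(s,\cdot)$.

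First, I would show that $\SSSPFN(\GC, s', k/2)$ satisfies its premise. The three axioms of \Cref{def:restrictedgraphs} transfer to $\GC$: weights are inherited from $H$, $s'$ has $0$-weight edges to all other vertices by construction, and every cycle in $\GC$ projects via $\pi$ to a closed walk in $H$ that decomposes into simple $H$-cycles each satisfying $w(C)/|C| \ge 1$, giving the same bound in $\GC$. For the negative-edge count, a shortest path from $s'$ consists of the $0$-weight edge $(s', u')$ followed by a shortest path inside a single SCC $S$ of $G'$; since the $w_{\ge 0}$-diameter of $S$ is at most $\lambda\cdot d_{\mathrm{cov}} = k/2$, one can close the internal path with a short reverse path inside $S$ and invoke $w(C)\ge|C|$ to bound its edge count, and then use the fact that at most half of any cycle's edges can be negative to conclude the $k/2$ bound. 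Second, the SCCs of $G''$ coincide with the SCCs of the individual layers $G'_i$, since cross-layer edges are strictly directed from layer $i$ to layer $i+1$. For any intra-SCC edge $(u', v')$ of $G'_i$, the adjusted weight $w(u', v') + d_{s'}(u') - d_{s'}(v')$ is non-negative by the triangle inequality for $\dist_{\GC}(s', \cdot)$, which $d_{s'}$ correctly computes by the inductive hypothesis; \Cref{lem:DAGsssp} then yields $d_{s''}$, and \Cref{lem:potentialequivalence} gives $d_{s''}(u'_i) + \phi(u'_i) = \dist_{G''}(s'', u'_i)$.

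Third and hardest, I would show $d_s(u) = \dist_H(s, u)$. The inequality $d_s(u) \ge \dist_H(s, u)$ is immediate, as any walk from $s''$ to a copy of $u$ in $G''$ projects via $\pi$ to an $s$-to-$u$ walk of the same weight. For $d_s(u) \le \dist_H(s, u)$, take a shortest $s$-to-$u$ path $P$ in $H$. Since $s$ has a direct $0$-weight edge to $u$, we have $w(P) \le 0$, which combined with $n_-(P) \le k$ (the number of negative edges on $P$) gives $w_{\ge 0}(P) = w(P) + n_-(P) \le k = 2\lambda\cdot d_{\mathrm{cov}}$. Partition $P$ greedily into sub-paths $P_1, \ldots, P_r$ (each of $w_{\ge 0}$-length at most $d_{\mathrm{cov}}$) separated by single ``transition'' edges, with $r \le x = 2\lambda$; each $P_j$ lifts, by \Cref{thm:pathcover} together with the start-vertex strengthening established in \Cref{subsec:pathcovercorrectness}, to a path $\tilde P_j$ in $G'$ whose start is $\led_{G'}(\Start(P_j))$. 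Placing $\tilde P_j$ in layer $j$ and realizing each transition edge via the corresponding inter-layer edge of $G''$ yields a valid walk from $s''$ (through the $0$-weight edge to $\led_{G'_1}(s)$) to a copy of $u$ of total weight exactly $w(P)$.

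The main obstacle is making the lifting in the third item fully rigorous: the inter-layer edges of $G''$ are designed precisely to land on \emph{representatives} in the next layer, which is exactly what the strengthened path-cover property guarantees as the start of each $\tilde P_j$. The choice $x = 2\lambda$ is calibrated so that the upper bound $w_{\ge 0}(P) \le 2\lambda\cdot d_{\mathrm{cov}}$ suffices for the greedy decomposition of $P$ to fit within the available layers, and the inductive hypothesis applied to $\GC$ provides exactly the distances needed to reweight $G''$ so that \Cref{lem:DAGsssp} applies.
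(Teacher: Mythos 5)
Your proposal follows essentially the same route as the paper: the same induction on $k$, the same verification that $(\GC,s')$ is a restricted instance whose shortest paths use at most $k/2$ negative edges (closing the intra-SCC path into a cycle via the diameter bound and invoking $w(C)\ge|C|$), the same potential/triangle-inequality argument enabling \Cref{lem:DAGsssp}, and the same lift-and-layer argument for both inequalities on $d_s(u)$. One caveat: the auxiliary claim that ``at most half of any cycle's edges can be negative'' is false in restricted graphs (a cycle with nine weight-$(-1)$ edges and one weight-$100$ edge violates it), but it is also dispensable, since your setup already yields that the closing cycle has at most $k/2$ edges in total, hence the intra-SCC path has at most $k/2$ edges and a fortiori at most $k/2$ negative ones.
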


We proceed by induction on $k$. The base case $k=O(\log^6 n)$ follows directly from \Cref{lem:SSSPfewneg}. For the induction step, assume $k=\omega(\log^6 n)$.

\paragraph{Validity of oracle calls.} We first verify the two oracle calls: the recursive call $\SSSPFN(\GC,s',k/2)$ and the call to \Cref{lem:DAGsssp}.

\begin{lemma}
    $(\GC,s')$ is a restricted SSSP instance.
\end{lemma}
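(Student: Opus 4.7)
The plan is to verify in turn the three requirements in \Cref{def:restrictedgraphs} for $(\GC,s')$; conditions (1) (weights in $\{-1,0,\dots,n\}$) and (3) (source has weight-$0$ edges to every other vertex) should be essentially by construction, while the substantive work is the minimum-mean-cycle bound (condition (2)).

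For conditions (1) and (3), I would just unwind the construction. Every edge of $G'$ either kept its weight from $G'_{\ge 0}$ (which, being non-negative and inherited from $H_{\ge 0}$ via $\pi$, equals the $H$-weight) or was restored to its original negative $H$-weight in the definition of $G'$. Since $\pi$ is a weight-preserving homomorphism (\Cref{def:inherited}) and $H$ has weights in $\{-1,0,\dots,n\}$, so does $G'$, and hence also $\GC$. The super-source edges are weight-$0$ by construction, which also delivers condition (3).

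For condition (2), first I would localize where cycles can live. Because $s'$ has only outgoing edges in $\GC$, it lies on no cycle. Moreover, the first step in building $\GC$ deletes every edge of $G'$ whose endpoints lie in different SCCs, so every cycle of $\GC$ sits inside a single SCC of $G'$, and in particular is a cycle of $G'$. Given such a cycle $C$, the image $\pi(C)$ is a closed walk in $H$ of the same total weight and the same number of edges, because $\pi$ is a weight-preserving graph homomorphism. An edge-disjoint decomposition of $\pi(C)$ into simple cycles $C_1,\dots,C_r$ of $H$, combined with the restricted property $w_H(C_j)/|C_j|\ge 1$ for each $j$, then yields
\[
\frac{w_{\GC}(C)}{|C|} \;=\; \frac{w_H(\pi(C))}{|\pi(C)|} \;=\; \frac{\sum_j w_H(C_j)}{\sum_j |C_j|} \;\ge\; 1.
\]

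The one subtle point I expect is the cycle-to-closed-walk transition: even when $C$ is simple in $G'$, the image $\pi(C)$ can revisit vertices of $H$, so it need not be a single cycle; the standard edge-disjoint cycle decomposition of closed walks handles this, and everything else is bookkeeping.
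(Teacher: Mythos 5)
Your proposal is correct and follows essentially the same route as the paper: verify conditions (1) and (3) by construction, observe that cycles of $\GC$ avoid $s'$ and live inside a single SCC of $G'$, and project them to closed walks in $H$ to get the mean-weight bound. The only difference is that the paper applies the condition $w(\pi(C))/|\pi(C)|\ge 1$ directly to the (possibly non-simple) projected cycle — which its conventions permit, since cycles with repeated edges are allowed in \Cref{def:restrictedgraphs} — whereas you insert an edge-disjoint simple-cycle decomposition plus a mediant inequality; that extra step is harmless and slightly more careful.
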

\begin{proof}
    We verify the three bullets in \Cref{def:restrictedgraphs}. Since $G'_{\ge 0}$ is a projection onto $H_{\ge 0}$ (\Cref{def:inherited}) and $G'$ restores weights only by replacing some $0$-weights with the original weights in $H$, $G'$ is a projection onto $H$. As $\GC-\{s'\}$ is a subgraph of $G'$, it is also a projection onto $H$; hence all edge weights lie in $\{-1,0,1,\dots,n\}$ (edges adjacent to $s'$ have weight $0$). 

    Let $C$ be any cycle in $\GC$. Since $s'$ has no incoming edges, $C$ lies in $\GC-\{s'\}$, which projects to a (possibly non-simple) cycle $\pi(C)$ in $H$. By \Cref{def:restrictedgraphs}, $w(\pi(C))/|\pi(C)|\ge 1$, hence $w(C)/|C|\ge 1$. The last bullet holds by construction of $s'$.
\end{proof}

\begin{lemma}
    Every shortest path from $s'$ in $\GC$ uses at most $k/2$ negative edges.
\end{lemma}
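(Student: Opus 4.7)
The plan is to combine three ingredients: (i) after its initial $0$-weight edge out of $s'$, any shortest path must stay inside a single SCC of $G'$, since the construction of $\GC$ deletes all inter-SCC edges of $G'$ and $s'$ has no incoming edges; (ii) every SCC of $G'_{\ge 0}$ has diameter at most $\lambda\cdot \dcov = k/2$ by the choice of parameters and \Cref{thm:pathcover}; and (iii) in a restricted graph, every closed walk has total weight at least its length, since the condition $w(C)/|C|\ge 1$ on simple cycles extends via cycle decomposition.

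First I will fix a shortest path $P=(s',v'_0,\dots,v'_L)$ from $s'$ in $\GC$ and let $Q$ denote its tail $(v'_0,\dots,v'_L)$. By (i), $Q$ lies entirely in one SCC $C$ of $G'$. By (ii), there is a path $R$ inside $C$ from $v'_L$ back to $v'_0$ with $w_{G'_{\ge 0}}(R)\le k/2$. Since every edge satisfies $w_{\GC}(e)=w_{G'}(e)\le \max(0,w_{G'}(e))=w_{G'_{\ge 0}}(e)$, the very same path $R$ viewed in $\GC$ obeys $w_{\GC}(R)\le k/2$.

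Next I will project the cycle $Q\oplus R$ of $\GC-\{s'\}$ down to $H$ via the weight-preserving map $\pi$. Because $\GC-\{s'\}\subseteq G'$ is a projection onto $H$, this yields a closed walk in $H$ of length $|Q|+|R|$ and of the same weight. Decomposing this closed walk into simple cycles of $H$ and applying (iii) to each piece gives
\[
w_{\GC}(Q)+w_{\GC}(R) \;=\; w(\pi(Q\oplus R)) \;\ge\; |Q|+|R|,
\]
so $w_{\GC}(Q)\ge |Q|-k/2$. On the other hand, the direct edge $(s',v'_L)$ of weight $0$ shows $\dist_{\GC}(s',v'_L)\le 0$; combined with the $0$-weight initial edge of $P$, this forces $w_{\GC}(Q)\le 0$. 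Hence $|Q|\le k/2$, and since the first edge of $P$ has weight $0$ while every negative edge of $P$ lies inside $Q$, $P$ contains at most $k/2$ negative edges.

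The only subtle step I foresee is the extension of the mean-weight inequality from simple cycles (as stated in \Cref{def:restrictedgraphs}) to arbitrary closed walks in $H$; I will handle this by observing that the multiset of edges of any closed walk is Eulerian in the induced multigraph and hence decomposes into simple cycles (each satisfying the hypothesis), so the inequality adds up. Everything else is a direct consequence of the construction of $\GC$ and the diameter guarantee of the path cover.
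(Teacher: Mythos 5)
Your proof is correct and follows essentially the same route as the paper's: take the tail of the shortest path, note it stays within a single SCC of $G'$, close it into a cycle using the $\le k/2$ diameter guarantee of the SCC, project to $H$, and apply the mean-weight condition together with $w_{\GC}(Q)\le 0$ to force $|Q|\le k/2$. The only cosmetic difference is that you take the return path $R$ inside $G'$ and project the whole cycle $Q\oplus R$, whereas the paper picks the return path $\tilde p$ directly in $H_{\ge 0}$ and closes the cycle $\pi(p')\oplus\tilde p$; the two are equivalent since the projection is weight-preserving.
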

\begin{proof}
    Suppose, for contradiction, that a shortest $s'$-to-$v'$ path $p$ in $\GC$ uses $>k/2$ negative edges. Let $p'$ be $p$ with its first vertex (the source $s'$) removed; then $p'$ still uses $>k/2$ negative edges and lies inside a single SCC of $G'$ by construction of $\GC$. Also $w_{\GC}(p')\le 0$, as otherwise the direct $0$-edge from $s'$ to $\End(p')$ would give a strictly shorter path, contradicting minimality of $p$.

    Write $\Start(p')=u'$ and $\End(p')=v'$. Since each SCC of $G'_{\ge 0}$ has (strong) diameter at most $\ddiam\le k/2$, there exists in $H_{\ge 0}$ a path $\tilde{p}$ from $\pi(v')$ to $\pi(u')$ of length at most $k/2$. The walk $\pi(p')\oplus\tilde{p}$ forms a (possibly non-simple) cycle in $H$, so by \Cref{def:restrictedgraphs},
    \[
      w(\pi(p')) + w(\tilde{p}) \;\ge\; |\pi(p')|+|\tilde{p}|.
    \]
    Using $w(\pi(p'))=w_{\GC}(p')\le 0$ and $w(\tilde{p})\le w_{\ge 0}(\tilde{p})\le k/2$, the number of \emph{all} edges in $p'$ is
    \[
      |p'|  = |\pi(p')|  \;\le\; k/2,
    \]
    contradicting that $p'$ has more than $k/2$ \emph{negative} edges.
\end{proof}

\begin{lemma}\label{lem:validDAGcall}
    For every edge $(u'',v'')\in E(G''_\phi)$ with $u'',v''$ in the same SCC of $G''_\phi$, we have $w_{G''_\phi}(u'',v'')\ge 0$. 
\end{lemma}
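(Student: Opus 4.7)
The plan is to exploit the layered structure of $G''$ to reduce the claim to a standard Johnson-style potential argument on $\GC$. First I would observe that every directed cycle in $G''$ lives inside a single layer: the only inter-layer edges $(u'_i, v'_{i+1})$ strictly increase the layer index, and the super-source $s''$ has no incoming edges. Hence each non-trivial SCC of $G''$ is contained in some single $V(G'_i)$, and in particular any edge $(u''_i, v''_j)$ whose endpoints share an SCC of $G''$ must satisfy $i=j$, so it is a copy of some intra-layer edge $(u', v') \in E(G')$.

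Next, since $u''_i$ and $v''_i$ share an SCC of $G'_i \cong G'$, the vertices $u', v'$ share an SCC of $G'$. The construction of $\GC$ only deletes cross-SCC edges of $G'$ and adds the super-source $s'$, so $(u',v')$ is preserved in $\GC$ with
\[
w_{\GC}(u',v') \;=\; w_{G'}(u',v') \;=\; w_{G''}(u''_i, v''_i).
\]

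Finally I would invoke the triangle inequality for shortest-path distances. The earlier lemmas in this subsection already establish that $(\GC, s')$ is a valid restricted SSSP instance (so in particular has no negative cycle reachable from $s'$), and the recursive call returns finite distances $d_{s'}(\cdot) = \dist_{\GC}(s', \cdot)$. Consequently $d_{s'}(v') \le d_{s'}(u') + w_{\GC}(u', v')$. By the definition of the potential, $\phi(u''_i) = d_{s'}(u')$ and $\phi(v''_i) = d_{s'}(v')$, and the case $u'' = s''$ does not arise here because $s''$ has no incoming edges and is therefore alone in its SCC. Putting everything together,
\[
w_{G''_\phi}(u''_i, v''_i) \;=\; w_{G''}(u''_i,v''_i) + \phi(u''_i) - \phi(v''_i) \;=\; w_{\GC}(u',v') + d_{s'}(u') - d_{s'}(v') \;\ge\; 0,
\]
which is the desired inequality.

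The only step needing care is the structural observation that no SCC of $G''$ crosses a layer; once this is noted, everything reduces to textbook Johnson reweighting, so I do not anticipate any serious obstacle.
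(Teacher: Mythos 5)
Your proof is correct and takes essentially the same route as the paper: you make the same structural observation that $G''$-edges are non-decreasing in layer index so SCCs live inside a single copy $G'_i$, reduce to an intra-layer edge $(u',v')$ lying in a single SCC of $G'$ (hence surviving in $\GC$), and conclude via the triangle inequality for $\dist_{\GC}(s',\cdot)$. The only difference is that you spell out the intermediate steps (why the edge is preserved in $\GC$, why the distances are finite) a bit more explicitly than the paper does, which is harmless.
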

\begin{proof}
    The source $s''$ has no incoming edges, so assume $u'',v''\neq s''$. By construction of $G''$, edges go only from $G'_i$ to $G'_j$ with $j\ge i$, hence if $u'',v''$ lie in the same SCC of $G''_\phi$, they must belong to the same copy $G'_i$ and correspond to $u',v'\in V(G')$ lying in the same SCC of $G'$. It suffices to show
    \[
      w_{G'}(u',v')+\phi(u')-\phi(v')\ge 0.
    \]
    By definition, $\phi(z')=\dist_{\GC}(s',z')$ for $z'\in V(G')$. The triangle inequality in $\GC$ gives
    \[
      w_{\GC}(u',v') + \dist_{\GC}(s',u') - \dist_{\GC}(s',v') \;\ge\; 0.
    \]
    Since $w_{\GC}(u',v')=w_{G'}(u',v')$, we obtain $w_{G'}(u',v')+\phi(u')-\phi(v')\ge 0$, as desired.
\end{proof}

\paragraph{Correctness of the returned distances.}
\begin{lemma}
    $d_s(u)\ge \dist_H(s,u)$ for all $u\in V(H)$.
\end{lemma}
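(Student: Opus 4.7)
The plan is to unwind the definitions so that $d_s(u)$ becomes a plain shortest-walk quantity in $G''$, then project walks in $G''$ down to walks in $H$ in a weight-preserving way, and finally use the fact that $H$ has no negative cycle to compare with $\dist_H(s,u)$.

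First I would apply \Cref{lem:potentialequivalence} to the graph $G''$ with potential $\phi$. Since $\phi(s'')=0$, this gives
\[
d_{s''}(u'_i)+\phi(u'_i) \;=\; \dist_{G''_\phi}(s'',u'_i)+\phi(u'_i) \;=\; \dist_{G''}(s'',u'_i).
\]
(The premise of \Cref{lem:potentialequivalence} — absence of negative cycles in $G''$ — will be justified by the same projection argument below, since a negative cycle in $G''$ would lift to a negative closed walk in $H$, contradicting the cycle-mean lower bound of a restricted graph.) Hence the quantity the algorithm returns is
\[
d_s(u)=\min_{u'_i\in\pi^{-1}(u)}\dist_{G''}(s'',u'_i),
\]
and so the goal reduces to showing $\dist_{G''}(s'',u'_i)\ge\dist_H(s,u)$ for every copy $u'_i$ of $u$ in $G''$.

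Next I would define a projection $\Pi\colon V(G'')\setminus\{s''\}\to V(H)$ that sends every copy $u'_i$ in the layer $G'_i$ to $\pi(u')\in V(H)$. By construction of $G''$, every edge of $G''\setminus\{s''\}$ is either an intra-layer edge of some $G'_i$ or an inter-layer edge from $G'_i$ to $G'_{i+1}$; in both cases its weight equals $w_H(\Pi(\cdot),\Pi(\cdot))$, because $G'$ is itself a projection onto $H$ (recall that the weights in $G'$ were restored from $H$, and the inter-layer edges are explicitly defined from $E(H)$). The edges from $s''$ all have weight $0$ and land on copies of $s$. Therefore, for any walk $P=(s'',v''_1,\dots,v''_k=u'_i)$ in $G''$, applying $\Pi$ from $v''_1$ onward yields a walk $\Pi(P)=(s=\Pi(v''_1),\dots,\Pi(u'_i)=u)$ in $H$ of the same total weight.

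Finally I would conclude using the fact that a restricted graph has no negative cycle (each cycle satisfies $w(C)/|C|\ge 1$). Hence in $H$ every walk from $s$ to $u$ has weight at least $\dist_H(s,u)$, since cancelling any cycle from a walk can only decrease its weight. Applied to a minimum-weight walk in $G''$ realizing $\dist_{G''}(s'',u'_i)$, this gives $\dist_{G''}(s'',u'_i)\ge\dist_H(s,u)$, and taking the minimum over $u'_i\in\pi^{-1}(u)$ yields $d_s(u)\ge\dist_H(s,u)$.

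The only subtle point — and really the only place where this could go wrong — is the weight-preservation of $\Pi$ on inter-layer edges: one must check that the inter-layer edge $(u'_i,v'_{i+1})$ added for $(u,v)\in E(H)$ genuinely carries weight $w_H(u,v)$ rather than some modified value, so that the projected walk lies entirely in $H$ with unchanged weight. This is immediate from the construction in the algorithm, but it is worth spelling out since it is the one place where the multi-layer gadget $G''$, as opposed to the path cover $G'$, interacts with the original edge set of $H$.
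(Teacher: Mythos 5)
Your proposal is correct and follows essentially the same route as the paper's proof: undo the potential shift via \Cref{lem:potentialequivalence} to recover $\dist_{G''}(s'',u'_i)$, then project each $s''$-to-$u'_i$ walk in $G''$ down to a weight-preserving $s$-to-$u$ walk in $H$ and take the minimum over copies. The only cosmetic difference is that the paper explicitly cites \Cref{lem:DAGsssp} together with \Cref{lem:validDAGcall} to justify the identification $d_{s''}=\dist_{G''_\phi}(s'',\cdot)$, which you use tacitly; your extra care about walks versus simple paths and about the inter-layer edge weights is consistent with the construction and harmless.
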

\begin{proof}
    By \Cref{lem:DAGsssp} and \Cref{lem:validDAGcall}, $d_{s''}(z)$ equals $\dist_{G''_\phi}(s'',z)$ for all $z\in V(G'')$. By \Cref{lem:potentialequivalence},
    \[
      d_{s''}(u'_i)+\phi(u'_i)=\dist_{G''}(s'',u'_i).
    \]
    Any $s''$-to-$u'_i$ path in $G''$ maps to an $s$-to-$u$ path in $H$ of the same weight by replacing each $u'_i$ with $\pi(u')$; the initial edge $(s'',u'_i)$ has weight $0$ and ensures the path starts at $s$. Taking minima over copies yields $d_s(u)\ge \dist_H(s,u)$.
\end{proof}

\begin{lemma}
    $d_s(u)\le \dist_H(s,u)$ for all $u\in V(H)$.
\end{lemma}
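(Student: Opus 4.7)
The plan is to construct, for any shortest $s$-to-$u$ path $p$ in $H$, an explicit walk in $G''$ from $s''$ to some copy of $u$ whose total weight equals $w(p) = \dist_H(s,u)$; translating this weight bound through the potential $\phi$ then yields $d_s(u) \le \dist_H(s,u)$.

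First I would observe that a simple shortest $s$-to-$u$ path $p$ exists (since $H$ has no negative cycle) and that $w_{\ge 0}(p) \le k$. The upper bound follows because $w(p) = \dist_H(s,u) \le 0$ (from the $0$-weight edges out of $s$ in a restricted graph) and the at most $k$ negative edges on $p$ each have weight $\ge -1$; rewriting $w_{\ge 0}(p) = w(p) + \sum_{e \in p,\,w(e) < 0}(-w(e))$ then gives $w_{\ge 0}(p) \le 0 + k = k$.

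The main step is a greedy decomposition $p = p_1 \oplus e_{a_1} \oplus p_2 \oplus e_{a_2} \oplus \cdots \oplus p_t$, where each $p_j$ is a contiguous subpath and each $e_{a_j}$ is a single ``jump'' edge. I scan $p$ from left to right and end the current subpath $p_j$ just before the first edge whose inclusion would push $w_{\ge 0}(p_j)$ above $\dcov$; that edge becomes $e_{a_j}$. This rule gives $w_{\ge 0}(p_j) \le \dcov$ for every $j$ and $w_{\ge 0}(p_j) + w_{\ge 0}(e_{a_j}) > \dcov$ for every $j < t$. Summing and using $w_{\ge 0}(p) \le k$ yields $(t-1)\dcov < k$, so $t \le k/\dcov = 20000 \log^6 n = x$. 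This counting bound is the main obstacle, and it is precisely what motivates the algorithm's choice of $\dcov = k/(20000\log^6 n)$ and $x = 2\lambda$: any looser parameters would produce more subpaths than there are copies in $G''$.

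Each $p_j$ is a simple path in $H_{\ge 0}$ with $w_{\ge 0}$-length at most $\dcov$, so by the strengthened path-cover guarantee proved in \Cref{subsec:pathcovercorrectness}, $G'_{\ge 0}$ covers $p_j$ by a lift $p'_j$ with $\Start(p'_j) = \led_{G'}(\Start(p_j))$. Since $G'$ has the same edge set as $G'_{\ge 0}$ but restores the original $H$-weight on every edge, the same vertex sequence $p'_j$ is a path in $G'$ with $w_{G'}(p'_j) = w_H(p_j)$. I would place $p'_j$ inside copy $G'_j$ of $G''$ and splice consecutive lifts using the cross-copy edge that the construction of $G''$ provides for the $H$-edge $e_{a_j} = (\End(p_j),\Start(p_{j+1}))$: this edge runs from the copy $\End(p'_j) \in V(G'_j)$ to the representative $\Start(p'_{j+1}) \in V(G'_{j+1})$ with weight $w(e_{a_j})$. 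Prepending the $0$-weight edge from $s''$ to $\Start(p'_1)$ (which exists since $\Start(p'_1)$ is a copy of $s$) produces a walk in $G''$ from $s''$ to $u''' := \End(p'_t) \in \pi^{-1}(u)$ of total weight exactly $w(p) = \dist_H(s,u)$. Finally, since every cycle of $G''$ lies inside a single $G'_i$ and cycles of $G'$ project to non-negative cycles of $H$ (by the restricted-graph property), $G''$ has no negative cycle, so \Cref{lem:potentialequivalence} with $\phi(s'') = 0$ gives
\[
d_{s''}(u''') + \phi(u''') = \dist_{G''_\phi}(s'', u''') + \phi(u''') = \dist_{G''}(s'', u''') \le w(p) = \dist_H(s,u),
\]
and taking the minimum over all copies of $u$ concludes $d_s(u) \le \dist_H(s,u)$.
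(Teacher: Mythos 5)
Your proof is correct and follows essentially the same route as the paper's: bound $w_{\ge 0}(p)\le k$, partition $p$ into at most $x$ pieces of $w_{\ge 0}$-length at most $\dcov$, lift each piece into its own copy $G'_j$ using the strengthened path-cover guarantee, splice via the cross-copy edges to the representatives, and translate through the potential $\phi$. The only difference is that you spell out the greedy decomposition and the $(t-1)\dcov < k$ count that the paper states in one line as "partition $p$ into at most $x$ subpaths," and you explicitly verify that $G''$ has no negative cycle before invoking \Cref{lem:potentialequivalence}, both of which are welcome but minor elaborations rather than a new argument.
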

\begin{proof}
    Let $p$ be a shortest $s$-to-$u$ path in $H$. Since every edge weight is at least $-1$, we have the number of negative edges in $p$ is at most $k$ (by assumption) and hence $w_{\ge 0}(p)\le k$; otherwise $w(p)\ge -k+w_{\ge 0}(p)>0$, contradicting the $0$-edges from $s$.

    Recall $x=2\lambda$ and $\dcov=k/(2\lambda)$. Partition $p$ into at most $x$ subpaths $p=p_1\oplus\cdots\oplus p_x$ with $w_{\ge 0}(p_i)\le \dcov$. Each $p_i$ is covered by $G'_{\ge 0}$, hence also by $G'$ (weights restored). For each $i$, let $p'_i$ be a lift in $G'$, and let $p''_i$ be the copy of $p'_i$ in $G'_i$. By construction of $G''$, there is an edge from $\End(p''_i)$ to $\Start(p''_{i+1})$ with weight equal to $w(\End(p_i),\Start(p_{i+1}))$. Thus $p''=p''_1\oplus\cdots\oplus p''_x$ is an $s''$-to-$u'_x$ path in $G''$ with $w_{G''}(p'')=w(p)$. Therefore,
    \[
      \min_{\pi(u')=u}\dist_{G''}(s'',u'_i)\le \dist_H(s,u).
    \]
    Using $d_{s''}(u'_i)+\phi(u'_i)=\dist_{G''}(s'',u'_i)$ finishes the proof.
\end{proof}

\subsection{Time Complexity}

Let $T(m,k)$ denote the running time of $\SSSPFN(H,s,k)$. We show $T(m,n)=O(m\log^8 n)$ by writing a recurrence for $T(m,k)$.

\begin{itemize}
    \item Building the path cover for $H_{\ge 0}$ takes $O(m\log^4 n)$ time by \Cref{thm:pathcover} (since $\sqrt{\lambda}=100\log^3 n$).
    \item The recursive call requires an SCC decomposition in $O(m)$ time and costs
    \[
      T\!\left(\left(1+\tfrac{1}{\log n}\right)m,\ \tfrac{k}{2}\right).
    \]
    \item For computing distances, we build $G''$. Each vertex $u'_i$ has out-degree $O(\deg_H(\pi(u')))$, and there are $x=O(\log^6 n)$ copies, hence
    \[
      |E(G'')|\;\le\; O(\log^6 n)\cdot \sum_{u'\in V(G')}\deg_H(\pi(u')) \;=\; O(m\log^6 n)
    \]
    by \Cref{thm:pathcover}. Therefore \Cref{lem:DAGsssp} on $G''$ costs $O(m\log^7 n)$ time.
\end{itemize}
We obtain the recurrence
\[
T(m,k)\;\le\; T\!\left(\left(1+\frac{1}{\log n}\right)m,\frac{k}{2}\right)\;+\;O\!\left(m\log^7 n\right),
\]
with base case $T(m,O(\log^6 n))=O(m\log^7 n)$ by \Cref{lem:SSSPfewneg}. Solving gives $T(m,n)=O(m\log^8 n)$.

\bibliographystyle{alpha}
\bibliography{refs}

\newcommand{\etalchar}[1]{$^{#1}$}
\begin{thebibliography}{VDBCK{\etalchar{+}}24}

\bibitem[ABC{\etalchar{+}}23]{ashvinkumar2023parallel}
Vikrant Ashvinkumar, Aaron Bernstein, Nairen Cao, Christoph Grunau, Bernhard Haeupler, Yonggang Jiang, Danupon Nanongkai, and Hsin~Hao Su.
\newblock Parallel and distributed exact single-source shortest paths with negative edge weights.
\newblock {\em arXiv preprint arXiv:2303.00811}, 2023.

\bibitem[ABCP98]{awerbuch1998near}
Baruch Awerbuch, Bonnie Berger, Lenore Cowen, and David Peleg.
\newblock Near-linear time construction of sparse neighborhood covers.
\newblock {\em SIAM Journal on Computing}, 28(1):263--277, 1998.

\bibitem[AMV20]{axiotis2020circulation}
Kyriakos Axiotis, Aleksander Madry, and Adrian Vladu.
\newblock Circulation control for faster minimum cost flow in unit-capacity graphs.
\newblock In {\em 2020 IEEE 61st Annual Symposium on Foundations of Computer Science (FOCS)}, pages 93--104. IEEE, 2020.

\bibitem[Bar96]{bartal1996probabilistic}
Yair Bartal.
\newblock Probabilistic approximation of metric spaces and its algorithmic applications.
\newblock In {\em Proceedings of 37th Conference on Foundations of Computer Science}, pages 184--193. IEEE, 1996.

\bibitem[BCF23]{BringmannCF23}
Karl Bringmann, Alejandro Cassis, and Nick Fischer.
\newblock Negative-weight single-source shortest paths in near-linear time: Now faster!
\newblock In {\em {FOCS}}, pages 515--538. {IEEE}, 2023.

\bibitem[BNW25]{BernsteinNW25}
Aaron Bernstein, Danupon Nanongkai, and Christian Wulff{-}Nilsen.
\newblock Negative-weight single-source shortest paths in near-linear time.
\newblock {\em Commun. {ACM}}, 68(2):87--94, 2025.
\newblock First announced at FOCS'22.

\bibitem[Chu21]{chuzhoy2021decremental}
Julia Chuzhoy.
\newblock Decremental all-pairs shortest paths in deterministic near-linear time.
\newblock In {\em Proceedings of the 53rd Annual ACM SIGACT Symposium on Theory of Computing}, pages 626--639, 2021.

\bibitem[CKL{\etalchar{+}}22]{ChenKLPGS22}
Li~Chen, Rasmus Kyng, Yang~P. Liu, Richard Peng, Maximilian~Probst Gutenberg, and Sushant Sachdeva.
\newblock Maximum flow and minimum-cost flow in almost-linear time.
\newblock In {\em {FOCS}}, pages 612--623. {IEEE}, 2022.

\bibitem[CKL{\etalchar{+}}24]{chen2024almost}
Li~Chen, Rasmus Kyng, Yang~P Liu, Simon Meierhans, and Maximilian Probst~Gutenberg.
\newblock Almost-linear time algorithms for incremental graphs: Cycle detection, sccs, st shortest path, and minimum-cost flow.
\newblock In {\em Proceedings of the 56th Annual ACM Symposium on Theory of Computing}, pages 1165--1173, 2024.

\bibitem[CMSV17]{cohen2017negative}
Michael~B Cohen, Aleksander Madry, Piotr Sankowski, and Adrian Vladu.
\newblock Negative-weight shortest paths and unit capacity minimum cost flow in $o(m^{10/7} log w)$ time.
\newblock In {\em Proceedings of the Twenty-Eighth Annual ACM-SIAM Symposium on Discrete Algorithms}, pages 752--771. SIAM, 2017.

\bibitem[CP25]{chuzhoy2025fully}
Julia Chuzhoy and Merav Parter.
\newblock Fully dynamic algorithms for graph spanners via low-diameter router decomposition.
\newblock In {\em Proceedings of the 2025 Annual ACM-SIAM Symposium on Discrete Algorithms (SODA)}, pages 785--823. SIAM, 2025.

\bibitem[CZ23]{chuzhoy2023new}
Julia Chuzhoy and Ruimin Zhang.
\newblock A new deterministic algorithm for fully dynamic all-pairs shortest paths.
\newblock In {\em Proceedings of the 55th Annual ACM Symposium on Theory of Computing}, pages 1159--1172, 2023.

\bibitem[FHL{\etalchar{+}}25]{FischerHLRS25}
Nick Fischer, Bernhard Haeupler, Rustam Latypov, Antti Roeyskoe, and Aurelio~L. Sulser.
\newblock A simple parallel algorithm with near-linear work for negative-weight single-source shortest path.
\newblock In {\em {SOSA}}, pages 216--225. {SIAM}, 2025.

\bibitem[FL21]{filtser2021clan}
Arnold Filtser and Hung Le.
\newblock Clan embeddings into trees, and low treewidth graphs.
\newblock In {\em Proceedings of the 53rd Annual ACM SIGACT Symposium on Theory of Computing}, pages 342--355, 2021.

\bibitem[Gar85]{garbow1985scaling}
Harold~N Garbow.
\newblock Scaling algorithms for network problems.
\newblock {\em Journal of Computer and System Sciences}, 31(2):148--168, 1985.

\bibitem[Gol95]{goldberg1995scaling}
Andrew~V Goldberg.
\newblock Scaling algorithms for the shortest paths problem.
\newblock {\em SIAM Journal on Computing}, 24(3):494--504, 1995.

\bibitem[GP19]{ghaffari2019improved}
Mohsen Ghaffari and Julian Portmann.
\newblock Improved network decompositions using small messages with applications on mis, neighborhood covers, and beyond.
\newblock In {\em 33rd International Symposium on Distributed Computing (DISC 2019)}, pages 18--1. Schloss Dagstuhl--Leibniz-Zentrum f{\"u}r Informatik, 2019.

\bibitem[GT89]{gabow1989faster}
Harold~N Gabow and Robert~E Tarjan.
\newblock Faster scaling algorithms for network problems.
\newblock {\em SIAM Journal on Computing}, 18(5):1013--1036, 1989.

\bibitem[HHZ22]{haepler2022adaptive}
Bernhard Haepler, D~Ellis Hershkowitz, and Goran Zuzic.
\newblock Adaptive-adversary-robust algorithms via small copy tree embeddings.
\newblock In {\em 30th Annual European Symposium on Algorithms (ESA 2022)}, volume 244, page~63. Schloss Dagstuhl-Leibniz-Zentrum f{\"u}r Informatik, 2022.

\bibitem[HLS24]{haeupler2024dynamic}
Bernhard Haeupler, Yaowei Long, and Thatchaphol Saranurak.
\newblock Dynamic deterministic constant-approximate distance oracles with $n^{\eps}$ worst-case update time.
\newblock In {\em 2024 IEEE 65th Annual Symposium on Foundations of Computer Science (FOCS)}, pages 2033--2044. IEEE, 2024.

\bibitem[Joh77]{Johnson77}
Donald~B. Johnson.
\newblock Efficient algorithms for shortest paths in sparse networks.
\newblock {\em J. {ACM}}, 24(1):1--13, 1977.

\bibitem[Li25]{li25_det}
Jason Li.
\newblock Deterministic padded decompositions and negative-weight shortest paths.
\newblock 2025.

\bibitem[LM24]{li2024bottom}
Jason Li and Connor Mowry.
\newblock A bottom-up algorithm for negative-weight sssp with integrated negative cycle finding.
\newblock {\em arXiv preprint arXiv:2411.19449}, 2024.

\bibitem[MPX13]{miller2013parallel}
Gary~L Miller, Richard Peng, and Shen~Chen Xu.
\newblock Parallel graph decompositions using random shifts.
\newblock In {\em Proceedings of the twenty-fifth annual ACM symposium on Parallelism in algorithms and architectures}, pages 196--203, 2013.

\bibitem[VDBCK{\etalchar{+}}23]{van2023deterministic}
Jan Van Den~Brand, Li~Chen, Rasmus Kyng, Yang~P Liu, Richard Peng, Maximilian~Probst Gutenberg, Sushant Sachdeva, and Aaron Sidford.
\newblock A deterministic almost-linear time algorithm for minimum-cost flow.
\newblock In {\em 2023 IEEE 64th Annual Symposium on Foundations of Computer Science (FOCS)}, pages 503--514. IEEE, 2023.

\bibitem[VDBCK{\etalchar{+}}24]{van2024almost}
Jan Van Den~Brand, Li~Chen, Rasmus Kyng, Yang~P Liu, Simon Meierhans, Maximilian~Probst Gutenberg, and Sushant Sachdeva.
\newblock Almost-linear time algorithms for decremental graphs: Min-cost flow and more via duality.
\newblock In {\em 2024 IEEE 65th Annual Symposium on Foundations of Computer Science (FOCS)}, pages 2010--2032. IEEE, 2024.

\bibitem[VDBLL{\etalchar{+}}21]{van2021minimum}
Jan Van Den~Brand, Yin~Tat Lee, Yang~P Liu, Thatchaphol Saranurak, Aaron Sidford, Zhao Song, and Di~Wang.
\newblock Minimum cost flows, mdps, and $\ell_1$-regression in nearly linear time for dense instances.
\newblock In {\em Proceedings of the 53rd Annual ACM SIGACT Symposium on Theory of Computing}, pages 859--869, 2021.

\bibitem[vdBLN{\etalchar{+}}20]{van2020bipartite}
Jan van~den Brand, Yin-Tat Lee, Danupon Nanongkai, Richard Peng, Thatchaphol Saranurak, Aaron Sidford, Zhao Song, and Di~Wang.
\newblock Bipartite matching in nearly-linear time on moderately dense graphs.
\newblock In {\em 2020 IEEE 61st Annual Symposium on Foundations of Computer Science (FOCS)}, pages 919--930. IEEE, 2020.

\end{thebibliography}

\appendix
\section{Basic Subroutines for SSSP}
\label{sec:basicSSSP}

\begin{proof}[Proof of \Cref{lem:DAGsssp}]
    Let the SCCs of $G$ be $(V_1,V_2,\dots,V_z)$ in a topological order of the condensation DAG. Define a potential function $\phi$ by
    \[
      \forall\, i\in[z],\ \forall\, v\in V_i:\quad \phi(v)=-W\cdot i.
    \]
    If $(u,v)\in E$ has $u,v$ in the same SCC, then $w_\phi(u,v)=w(u,v)\ge 0$. Otherwise, if $u\in V_i$ and $v\in V_j$ with $i<j$, then
    \[
      w_\phi(u,v)=w(u,v)+\phi(u)-\phi(v)=w(u,v)+W(j-i)\ge w(u,v)+W\ge 0.
    \]
    Hence all edges of $G_\phi$ have nonnegative weight, so Dijkstra's algorithm computes SSSP on $G_\phi$ in $O(m\log n)$ time. By \Cref{lem:potentialequivalence}, this yields a shortest-path tree for $G$ as well.
\end{proof}

\begin{proof}[Proof of \Cref{lem:SSSPfewneg}]
    Build a $(k+1)$-layer graph $G'$ as follows. For each $i\in[k+1]$, let $G_i$ be a copy of $G_{\ge 0}$ (the subgraph of nonnegative-weight edges), and let $u_i$ denote the copy of $u\in V$ in layer $i$. For every negative edge $(u,v)\in E$ with $w(u,v)<0$ and every $i\in[k]$, add the interlayer edge $(u_i,v_{i+1})$ with weight $w(u,v)$. Run Dijkstra's algorithm from $s_1$ in $G'$, and return
    \[
      \min_{i\in[k+1]}\ \dist_{G'}(s_1,u_i)\qquad\text{as }\dist_G(s,u).
    \]

    \emph{Running time.}
    We have $|V(G')|=(k+1)|V|$ and
    \[
      |E(G')|=(k+1)|E_{\ge 0}|+k|E_{<0}|\le O(km).
    \]
    Thus Dijkstra on $G'$ takes $O(km\log |V(G')|)=O(km\log n)$ time since $|V(G')|\le (k+1)n$.

    \emph{Correctness.}
    First, fix $i\in[k+1]$ and let $P'$ be any $s_1\!\to\!u_i$ path in $G'$. Projecting layers and replacing each interlayer edge by its original negative edge yields an $s\!\to\!u$ walk in $G$ of weight at most $w(P')$. Hence $\dist_{G'}(s_1,u_i)\ge \dist_G(s,u)$.

    Conversely, let $p$ be a shortest $s\!\to\!u$ path in $G$. By assumption, $p$ uses at most $k$ negative edges. Deleting these negative edges decomposes $p$ as
    \[
      p \;=\; p_1 \oplus p_2 \oplus \cdots \oplus p_z,
    \]
    with $z\le k+1$ and each $p_j$ consisting only of nonnegative edges. For each $j\in[z]$, let $p'_j$ be the corresponding path in layer $G_j$ (same vertices, same weight). For each negative edge between $p_j$ and $p_{j+1}$, use the interlayer edge from layer $j$ to $j+1$. Concatenating $p'_1,\ldots,p'_z$ with these interlayer edges yields an $s_1\!\to\!u_z$ path in $G'$ of weight exactly $w(p)=\dist_G(s,u)$. Therefore
    \[
      \min_{i}\dist_{G'}(s_1,u_i)\le \dist_G(s,u).
    \]
    Combining both directions gives $\dist_G(s,u)=\min_{i}\dist_{G'}(s_1,u_i)$.
\end{proof}

\section{Randomized or Undirected Solutions for \Cref{prob:path covering}}

\label{sec:previous solution}

\paragraph{Low Diameter Decomposition: Randomized Solution.}

Let $G$ be a directed graph $G$ with non-negative weight and $d'\ge0$. A \emph{low-diameter decomposition} (LDD) is an edge set $E'\subseteq E$ such that $G\setminus E'$ is a $\tilde{O}(d')$-clustered graph and, for each $e\in E$, $\Pr[e\in E']\le\frac{w(e)}{d'}\log n$. Bernstein et al\@.~\cite{BernsteinNW25} showed how to sample $E'$ in near-linear time. 

Observe that, by sampling $z=O(\log n)$ copies of LDD $E'_{i}$ of diameter parameter $d'=2d\log n$ and set $G'_{i}=G\setminus E'_{i}$, each length-$d$ path $p$ is covered by some $G'_{i}$ with probability at least $1-1/n^{10}$. This is because, for each $i$, 
\[
\Pr[p\cap E'_{i}\neq\emptyset]\le\sum_{e\in p}\Pr[e\in E'_{i}]\le\sum_{e\in p}\frac{w(e)}{d'}\log n\le\frac{w(p)}{d'}\log n\le1/2.
\]
So, there is $i$ where $p\subseteq G\setminus E'_{i}$ with probability $1-1/2^{z}\ge1-1/n^{10}.$ Thus, every path in the unknown target set is covered with probability at least $1-1/n^8$ by union bound.

\paragraph{Neighborhood Cover: Undirected Solution.}

A $d$-clustering ${\cal C}$ consists of vertex-disjoint subsets called \emph{clusters}, each of which has diameter at most $d$. For any vertex $v$, let $B_{v}(v,d)=\{w\mid\dist_{G}(v,w)\le d\}$. 

A $d$-neighborhood cover ${\cal N}_{d}$ of $G$ is a collection of $z=O(\log n)$ many $O(d\log n)$-clustering ${\cal C}_{1},\dots,{\cal C}_{z}$ such that, for every vertex $v$, $B_{v}(v,d')\subseteq S$ for some cluster $S$ in some clustering ${\cal C}_{i}$. For any $d$, Awerbuch~et~al.~\cite{awerbuch1998near} showed how to construct ${\cal N}_{d}$ deterministically in near-linear time. 

Observe that ${\cal N}_{d}$ gives the solution to \Cref{prob:path covering}. Indeed, for each $i\le z$, let $G'_{i}=\bigcup_{S\in{\cal C}_{i}}G[S]$. Consider any length-$d$ path $p$ in $G$ starting from vertex $u$. We have that $p\subseteq B_{v}(v,d)\subseteq G'_{i}$ for some $i$.

\section{A Barrier for Covering Paths via Subgraphs}
\label{sec:barrier}

In this section, we prove \Cref{thm:barrier}.
\barriar*

Fix $m,\lambda \ge 1$. We will build a directed graph $G$ with $O(m)$ edges and a distance parameter $d$.

\medskip
\textbf{Step 1: choose parameters.}
Let
\[
L := \bigl\lfloor \sqrt{m/\lambda} \bigr\rfloor
\quad\text{and}\quad
d := 2 + 3L .
\]
So $L = \Theta(\sqrt{m/\lambda})$ and $d = \Theta(\sqrt{m/\lambda})$.

We will also use one more parameter
\[
R := 2 d \lambda .
\]
Note that $R > d\lambda$ by construction.

\medskip
\textbf{Step 2: ladder gadget.}
Create vertices $a_1, a_2, \dots, a_{L+1}$. For each layer $j=1,\dots,L$ do:
\begin{itemize}
  \item create $R$ new vertices $w_{j,1},\dots,w_{j,R}$,
  \item add the directed cycle
  \[
    w_{j,1} \to w_{j,2} \to \cdots \to w_{j,R} \to w_{j,1},
  \]
  \item add edges $a_j \to w_{j,r}$ for every $r \in [R]$,
  \item add edges $w_{j,r} \to a_{j+1}$ for every $r \in [R]$.
\end{itemize}
Thus, to “go through” layer $j$ you can do
\[
a_j \to w_{j,r} \to w_{j,r+1} \to a_{j+1},
\]
which is 3 edges.

The number of edges in one layer is
\[
R \text{ (cycle)} + R \text{ (entries)} + R \text{ (exits)} = 3R = 3 \cdot 2 d \lambda = 6 d \lambda.
\]
Since there are $L$ layers, the total number of edges in the ladder is
\[
6 d \lambda \cdot L = 6 (2 + 3L) \lambda L = O(\lambda L^2) = O(\lambda \cdot (m/\lambda)) = O(m).
\]

\medskip
\textbf{Key property of the ladder.}
Let $H$ be any subgraph of $G$ whose SCCs all have diameter at most $d\lambda$. Consider a fixed layer $j$.
If $H$ contained all $R$ edges of the directed cycle of layer $j$, then that SCC would have directed diameter $R-1 \ge d\lambda$, contradicting that $H$ is $d\lambda$-clustered.
Hence:  For every $d\lambda$-clustered subgraph $H$ and every layer $j\in[L]$, $H$ omits at least one edge of that layer's cycle.

\medskip
\textbf{Step 3: star in front of the ladder.}
Add a special vertex $r$ (the center of the star) and add the edge $r \to a_1$.
Then add
\[
M := \bigl\lfloor m/20 \bigr\rfloor
\]
new vertices $s_1,\dots,s_M$ and add star edges
\[
s_t \to r \quad \text{for } t=1,\dots,M.
\]
This adds $M+1 = O(m)$ edges. Since the ladder already cost $O(m)$ edges, the whole graph $G$ has $O(m)$ edges.

\medskip
\textbf{Step 4: the $d$-paths we care about.}
Consider a path that:
\begin{enumerate}
  \item starts with some star edge $s_t \to r$,
  \item then takes $r \to a_1$,
  \item then for each layer $j=1,\dots,L$ takes a 3-edge traversal
  \[
  a_j \to w_{j,r_j} \to w_{j,r_j+1} \to a_{j+1}.
  \]
\end{enumerate}
This path has length
\[
1 \;(\text{star}) \;+\; 1 \;(\text{to } a_1) \;+\; 3L \;=\; 2 + 3L \;=\; d.
\]
So these are exactly $d$-paths in $G$.

\medskip
\textbf{Step 5: diagonalization for one star edge.}
Let $\{G'_1,\dots,G'_z\}$ be any collection of $d\lambda$-clustered subgraphs of $G$ that covers all $d$-paths.

Fix one star edge $e_t := s_t \to r$.
Define
\[
I_t := \{\, i \in [z] : e_t \in E(G'_i) \,\}.
\]
We claim that
\begin{equation}
\label{eq:It-large}
|I_t| \;\ge\; L.
\end{equation}

Suppose toward a contradiction that $|I_t| = q < L$.
For each $i \in I_t$, for every layer $j$ it omits at least one cycle edge in that layer.
Since $q<L$, we can pick for every $i \in I_t$ a \emph{distinct} layer $\ell(i) \in [L]$ and let $f_i$ be one cycle edge in layer $\ell(i)$ that $G'_i$ omits.

Now we build a $d$-path $P$ as in Step~4, but with the following extra choice:
in layer $\ell(i)$ we force the traversal to \emph{use} exactly the edge $f_i$, and in all other layers we traverse arbitrarily.
This is a valid $d$-path because every layer allows us to pick any two consecutive cycle vertices.

By construction:
\begin{itemize}
  \item If $i \notin I_t$, then $G'_i$ does not contain the first edge $e_t$, so $P \nsubseteq G'_i$.
  \item If $i \in I_t$, then $G'_i$ does not contain $f_i$ (by choice), and $P$ uses $f_i$, so again $P \nsubseteq G'_i$.
\end{itemize}
Thus $P$ is a $d$-path that is not contained in any $G'_i$, contradicting the assumption that the family covers all $d$-paths.
Therefore \eqref{eq:It-large} holds.

\medskip
\textbf{Step 6: counting.}
For each of the $M = \Theta(m)$ star edges $e_t$ we have found at least $L$ subgraphs that contain it.
So the total number of \emph{edge--subgraph incidences} is at least
\[
\sum_{t=1}^M |I_t| \;\ge\; M \cdot L.
\]
But
\[
\sum_{i=1}^z |E(G'_i)|
\]
counts all edge--subgraph incidences (each time an edge appears in some $G'_i$ it is counted once).
Hence
\[
\sum_{i=1}^z |E(G'_i)| \;\ge\; M \cdot L.
\]

Finally, $M = \Theta(m)$ and $L = \Theta\bigl(\sqrt{m/\lambda}\bigr)$, so
\[
\sum_{i=1}^z |E(G'_i)|
\;\ge\; \Omega(m) \cdot \Omega\bigl(\sqrt{m/\lambda}\bigr)
\;=\; \Omega\bigl(m \sqrt{m/\lambda}\bigr),
\]
which is exactly the desired lower bound.

This completes the proof.

\end{document}